\documentclass[noinfoline,stslayout]{imsart}

\usepackage{amsfonts,amssymb,amsmath,amscd,amsthm,latexsym}
\usepackage{natbib}
\usepackage{graphicx}
\usepackage[]{units}
\usepackage{algorithmicx}
\usepackage{algpseudocode}
\usepackage{qtree}
\usepackage{algorithm}
\usepackage{xcolor}
\usepackage{comment}
\usepackage{morefloats}
\allowdisplaybreaks

\usepackage{amsthm}

\newtheorem{lemma}{Lemma}[section]
\newtheorem{theorem}{Theorem}[section]
\newtheorem{exemp}{Example}[section]

\newcommand\oh{\nicefrac{1}{2}}
\newcommand\dd{\mathrm{d}}

\usepackage{amsthm}
\usepackage{bigints}
\usepackage{amsmath}
\usepackage{natbib}
\usepackage[colorlinks,citecolor=blue,urlcolor=blue,filecolor=blue,backref=page]{hyperref}
\usepackage{graphicx}
\usepackage{amsfonts,amssymb,amsmath,amscd,amsthm,latexsym}
\usepackage{natbib}
\usepackage{graphicx}
\usepackage[]{units}

\usepackage{algorithmicx}
\usepackage{algpseudocode}
\usepackage{qtree}
\usepackage{algorithm}
\usepackage{xcolor}
\usepackage{comment}

\startlocaldefs
\endlocaldefs

\begin{document}


\begin{frontmatter}
\title{Jeffreys priors for mixture estimation: properties and alternatives}

\author{ Clara Grazian\thanks
{Corresponding Author: Nuffield Department of Medicine, University of Oxford, John Radcliffe Hospital, Microbiology Department, Headley Way, Oxford, OX3 9DU, United Kingdom. 
mail: clara.grazian@ndm.ox.ac.uk
} \and Christian P. Robert\thanks{CEREMADE Universit\'e Paris-Dauphine, University of Warwick and CREST, Paris.
e-mail: xian@ceremade.dauphine.fr. }
}

\begin{abstract}
While Jeffreys priors usually are well-defined for the parameters of mixtures of distributions, they are not available
in closed form. Furthermore, they often are improper priors. Hence, they have never been used to draw inference on the
mixture parameters.The implementation and the properties of Jeffreys priors in several mixture settings are studied.  It is shown that the associated posterior distributions most often are improper. Nevertheless, the Jeffreys prior for the mixture weights conditionally on the parameters of the mixture components will be shown to have the property of conservativeness with respect to the number of components, in case of overfitted mixture and it can be therefore used as a default priors in this context. 
\end{abstract}

\begin{keyword}
\kwd{Noninformative prior}
\kwd{mixture of distributions}
\kwd{Bayesian analysis}
\kwd{Dirichlet prior}
\kwd{improper prior}
\kwd{improper posterior}
\kwd{label switching}
\end{keyword}

\end{frontmatter}

\section{Introduction}
\label{intro}

Bayesian inference in mixtures of distributions has been studied quite extensively in the literature. See, e.g.,
\cite{maclachlan:peel:2000} and \cite{fruhwirth:2006} for book-long references and
\cite{lee:marin:mengersen:robert:2008} for one among many surveys. From a Bayesian perspective, one of the several
difficulties with this type of distribution,

\begin{equation}\label{eq:theMix}
\sum_{\ell=1}^k p_{\ell}\,f_\ell(x|\theta_\ell)\,,\quad \sum_{\ell=1}^k p_\ell=1\,,
\end{equation}
is that its ill-defined nature (non-identifiability, multimodality, unbounded likelihood, etc.) 
leads to restrictive prior modelling since most
improper priors are not acceptable. This is due in particular to the feature that
a sample from \eqref{eq:theMix} may contain no subset from one of the
$k$ components $f(\cdot|\theta_\ell)$ (see. e.g.,
\citealp{titterington:smith:makov:1985}). Albeit the probability of such an event is
decreasing quickly to zero as the sample size grows, it nonetheless prevents the use
of independent improper priors, unless such events are prohibited \citep{diebolt:robert:1994}. 
Similarly, the exchangeable nature of the components often induces both multimodality in the posterior distribution and
convergence difficulties as exemplified by the {\em label switching} phenomenon that is now quite well-documented
\citep{celeux:hurn:robert:2000, stephens:2000b, jasra:holmes:stephens:2005, fruhwirth:2006, geweke:2007,
puolamaki:kaski:2009}. This feature is characterized by a lack of symmetry in the outcome of a Monte Carlo Markov chain
(MCMC) algorithm, in that the posterior density is exchangeable in the components of the mixture but the MCMC sample
does not exhibit this symmetry. In addition, most MCMC samplers do not concentrate around a single mode of the posterior
density, partly exploring several modes, which makes the construction of Bayes estimators of the components much harder.

When specifying a prior over the parameters of \eqref{eq:theMix}, it is
therefore quite delicate to produce a manageable and sensible non-informative
version and some have argued against using non-informative priors
in this setting (for example, \cite{maclachlan:peel:2000} argues that it is
impossible to obtain proper posterior distributions from fully noninformative
priors), on the basis that mixture models are ill-defined objects that
require informative priors to give a meaning to the notion of a component of
\eqref{eq:theMix}. For instance, the distance between two components needs to be
bounded from below to avoid repeating the same component indefinitely.
Alternatively, the components all need to be informed by the data, as
exemplified in \cite{diebolt:robert:1994} who imposed a completion scheme
(i.e., a joint model on both parameters and latent variables) such that all
components were allocated at least two observations, thereby ensuring that the
(truncated) posterior was well-defined. \cite{wasserman:2000} proved ten years
later that this truncation led to consistent estimators and moreover that only
this type of priors could produce consistency. While the constraint on the
allocations is not fully compatible with the i.i.d. representation of a mixture
model, it naturally expresses a modelling requirement that all components have
a meaning in terms of the data, namely that all components genuinely
contributed to generating a part of the data. This translates as a form of weak
prior information on how much one trusts the model and how meaningful each
component is on its own (by opposition with the possibility of adding
meaningless artificial extra-components with almost zero weights or almost
identical parameters).

While we do not seek Jeffreys priors as the ultimate prior modelling for non-informative settings, being altogether convinced of the lack of unique reference priors \citep{robert:2001,robert:chopin:rousseau:2009}, we think it is nonetheless worthwhile to study the performances of those priors in the setting of mixtures in order to determine if indeed they can provide a version of reference priors and if they are at least well-defined in such settings. We will show that only in very specific situations the Jeffreys prior provides reasonable inference.

In Section \ref{sec:jeffreys} we provide a formal characterisation of properness of the posterior distribution for the parameters of a mixture model, in particular with Gaussian components, when a Jeffreys prior is used for them. In Section \ref{sec:prosper} we will analyze the properness of the Jeffreys prior and of the related posterior distribution: only when the weights of the components (which are defined in a compact space) are the only unknown parameters it turns out that the Jeffreys prior (and so the relative posterior) is proper; on the other hand, when the other parameters are unknown, the Jeffreys prior will be proved to be improper and in only one situation it provides a proper posterior distribution. In Section \ref{sec:alternative} we present a way to realize a noninformative analysis of mixture models, in particular we propose to use the Jeffreys prior as a default prior in case of overfitted mixtures and introduce improper priors for at least some parameters. The default proposal of Section \ref{sec:alternative} will be tested on several simulation studies in Section \ref{sec:simu} and several real examples in Section \ref{sec:examples}, on both well known datasets in the mixture literature and a new dataset. Section \ref{sec:concl} concludes the paper. 

\section{Jeffreys priors for mixture models}
\label{sec:jeffreys}

We recall that the Jeffreys prior was introduced by \cite{jeffreys:1939} as a default prior based
on the Fisher information matrix
\begin{equation}
\label{eq:jeffprior}
\pi^\text{J}(\theta) \propto |I(\theta)|^{\oh} = \left\lvert -\mathbb{E} \left[ \frac{\partial^2}{\partial\theta\partial\theta^\text{T}} \log g(X; \theta)\right]\right\rvert^{\oh}  \,,
\end{equation}
whenever the later is well-defined; $I(\cdot)$ stands for the expected Fisher information matrix and the symbol $|\cdot|$ denotes the determinant. Although the prior is endowed with some frequentist properties like matching and asymptotic
minimal information \citep[Chapter 3]{robert:2001}, it does not constitute the ultimate answer to the selection of prior
distributions in non-informative settings and there exist many alternatives such as reference priors
\citep{berger:bernardo:sun:2009}, maximum entropy priors \citep{rissanen:2012}, matching priors
\citep{ghosh:carlin:srivastava:1995}, and other proposals \citep{kass:wasserman:1996}. In most settings Jeffreys priors
are improper, which may explain for their conspicuous absence in the domain of mixture estimation, since the latter
prohibits the use of independent improper priors by allowing any subset of components to go ``empty" with positive probability.
That is, the likelihood of a mixture model can always be decomposed as a sum over all possible partitions of the data
into $k$ groups at most, where $k$ is the number of components of the mixture. This means that there are terms in this
sum where no observation from the sample brings any amount of information about the parameters of a specific component. 

Approximations of the Jeffreys prior in the setting of mixtures can be found,
e.g., in \cite{figueiredo:jain:2002}, where the authors revert to independent
Jeffreys priors on the components of the mixture. This induces the same
negative side-effect as with other independent priors, namely an impossibility
to handle improper priors. \cite{rubio:steel:2014} provides a closed-form
expression for the Jeffreys prior for a location-scale mixture with two
components. The family of distributions considered in \cite{rubio:steel:2014}  is
$$
\dfrac{2\epsilon}{\sigma_1}f\left(\frac{x-\mu}{\sigma_1}\right)\mathbb{I}_{x<\mu}+
\dfrac{2(1-\epsilon)}{\sigma_2}f\left(\frac{x-\mu}{\sigma_2}\right) \mathbb{I}_{x>\mu}
$$
(which thus hardly qualifies as a mixture, due to the orthogonality in the supports of both components that allows to identify which component each observation is issued from). The factor $2$ in the fraction
is due to the assumption of symmetry 
around zero for the density $f$. For this specific model, if we impose that the weight $\epsilon$ is a function of the variance parameters,
$
\epsilon=\nicefrac{\sigma_1}{\sigma_1+\sigma_2},
$
the Jeffreys prior is given by
$$
\pi(\mu,\sigma_1,\sigma_2) \propto \nicefrac{1}{\sigma_1\sigma_2\{\sigma_1+\sigma_2\}}.
$$
However, in this setting, \cite{rubio:steel:2014} demonstrates that the posterior associated with the (regular)
Jeffreys prior is improper, hence not relevant for conducting inference. 
\cite{rubio:steel:2014} also considers alternatives to the genuine Jeffreys prior, either by reducing the range or even
the number of parameters, or by building a product of conditional priors. They further consider so-called non-objective priors that are only relevant to the specific
case of the above mixture.

Another obvious explanation for the absence of Jeffreys priors is computational, namely the closed-form
derivation of the Fisher information matrix is analytically unavailable.
The reason is that the generic $[j,h]$-th element, with $j,h \in \{ 1,\cdots,k\}$, of the Fisher information matrix for mixture models is an integral of the form
\begin{equation}
\label{eq:Fisher-elem}
-\bigintss_{\mathcal{X}} \frac{\partial^2 \log \left[\sum\limits_{\ell=1}^k p_{\ell}\,f_\ell(x|\theta_\ell)\right]}{\partial \theta_j \partial \theta_h}\left[\sum_{\ell=1}^k p_\ell\,f_{\ell}(x|\theta_\ell)\right]^{-1} d x
\end{equation}
(in the special case of component densities with a univariate parameter) which cannot be computed analytically.
Since these are unidimensional integrals, we derive an approximation of the
elements of the Fisher information matrix based on Riemann sums. The resulting
computational expense is of order $\mathrm{O}(b^2)$ if $b$ is the total number
of (independent) parameters.  Since the elements of the information matrix
usually are ratios between the component densities and the mixture density,
there may be difficulties with non-probabilistic methods of integration. 

\section{Characterization of the Jeffreys priors for mixture models and respective posteriors}
\label{sec:prosper}

Unsurprisingly, most Jeffreys priors associated with mixture models are improper, the exception being when only the weights of the mixture are unknown, as already demonstrated in \cite{bernardo:giron:1988}. 

We will characterize properness and improperness of Jeffreys priors and derived posteriors, when some or all of the parameters of distributions from location-scale families are unknown. These results are analytically established; the behavior of the Jeffreys prior and of the deriving posterior has also been studied through simulations, with sufficiently large Monte Carlo experiments (see Section \ref{sec:simu}).
The following results are often presented for Gaussian mixture models, anyway, the Jeffreys prior has a behavior common to all the location-scale families; therefore the results may be generalized to any location-scale family. 

\subsection{Weights of mixture unknown}
\label{sub:weights}

A representation of the Jeffreys prior and the derived posterior distribution for the weights of a three-component mixture
model is given in Figure \ref{weights-priorpost}: the prior distribution is much more concentrated around extreme
values in the support, i.e., it is a prior distribution conservative in the number of important components. 

\begin{figure}
\centering
\includegraphics[width=12cm, height=7.5cm]{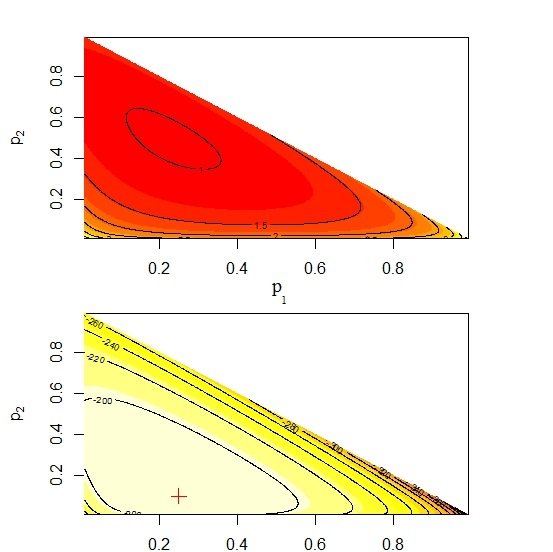}
\caption{Approximations (on a grid of values) of the Jeffreys prior (on the log-scale) when only the weights of a
Gaussian mixture model with 3 components are unknown (on the top) and of the derived posterior distribution (with known
means equal to -1, 0 and 2 respectively and known standard devitations equal to 1, 5 and 0.5 respectively). The red
cross represents the true values.} \label{weights-priorpost}
\end{figure}

\begin{lemma} 
\label{lem:weights}
When the weights $p_i$ are the only unknown parameters in \eqref{eq:theMix}, the corresponding Jeffreys prior 
is proper. 
\end{lemma}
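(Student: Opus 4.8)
The plan is to exploit the compactness of the simplex and reduce properness to a pure boundary estimate. Parametrise by the free weights $p_1,\dots,p_{k-1}$, with $p_k=1-\sum_{j<k}p_j$, and write $g(x)=\sum_{\ell=1}^k p_\ell f_\ell(x\mid\theta_\ell)$ for the mixture density (the $\theta_\ell$ being known), abbreviating $f_\ell=f_\ell(x\mid\theta_\ell)$. Differentiating, the score is $\partial_{p_j}\log g=(f_j-f_k)/g$, so the Fisher information is the $(k-1)\times(k-1)$ Gram matrix
\[
I(\mathbf p)_{jh}=\int_{\mathcal X}\frac{(f_j-f_k)(f_h-f_k)}{g}\,\dd x,\qquad j,h<k.
\]
Since the simplex $\Delta$ is bounded and $|I(\mathbf p)|^{\oh}$ is continuous on its interior, the only thing to control is the rate at which $|I(\mathbf p)|^{\oh}$ may blow up as one or more weights tend to $0$; concretely I must show $\int_\Delta|I(\mathbf p)|^{\oh}\,\dd\mathbf p<\infty$.

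The key step is to replace the asymmetric matrix $I(\mathbf p)$ by the full symmetric $k\times k$ Gram matrix $M(\mathbf p)_{\ell m}=\int_{\mathcal X}f_\ell f_m/g\,\dd x$ and to prove the identity $|I(\mathbf p)|=|M(\mathbf p)|$. This follows because $I=J^{\mathrm T}MJ$, where the columns $e_j-e_k$ of $J$ span the hyperplane $\{\mathbf 1^{\mathrm T}y=0\}$; augmenting $J$ by the column $\mathbf p$ gives a matrix of determinant $1$ that $M$-block-diagonalises the form, using $M\mathbf p=\mathbf 1$ (each $f_\ell$ integrating to one) and hence $\mathbf p^{\mathrm T}M\mathbf p=1$. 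Granting the identity, Hadamard's inequality yields $|M|\le\prod_{\ell=1}^k M_{\ell\ell}$, while the crude but universal bound $g\ge p_\ell f_\ell$ gives $M_{\ell\ell}=\int f_\ell^2/g\,\dd x\le p_\ell^{-1}\int f_\ell\,\dd x=p_\ell^{-1}$ (which incidentally shows the entries are finite on the interior). Therefore
\[
|I(\mathbf p)|^{\oh}=|M(\mathbf p)|^{\oh}\le\prod_{\ell=1}^k p_\ell^{-\oh},
\]
and the right-hand side is exactly a $\mathrm{Dirichlet}(\oh,\dots,\oh)$ kernel, whose integral over $\Delta$ equals $\Gamma(1/2)^k/\Gamma(k/2)<\infty$. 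Because the bound on $M_{\ell\ell}$ uses only that the $f_\ell$ are densities, the conclusion holds for any family of components, not merely Gaussian ones.

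The main obstacle is precisely the passage to the symmetric matrix $M$. Working directly with $I(\mathbf p)$ and bounding its diagonal, e.g. $I_{jj}\le 2(p_j^{-1}+p_k^{-1})$, forces the reference component $k$ to absorb the singularities of all the others; the resulting bound behaves like $p_k^{-(k-1)/2}$ near the vertex $p_k\to0$, which is integrable over $\Delta$ only for $k\le2$. The determinant identity $|I|=|M|$ redistributes the singularity symmetrically as $\prod_\ell p_\ell^{-\oh}$, and it is this symmetrisation that makes the Dirichlet integral converge for every $k$. The one routine verification I would still carry out is the relation $M\mathbf p=\mathbf 1$, which is what simultaneously drives the block-diagonalisation and the diagonal bound.
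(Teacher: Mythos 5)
Your proof is correct, and it takes a genuinely different route from the paper's. The paper proceeds by data augmentation: introducing the latent allocations, it notes that the complete-data model is multinomial in the weights, whose Jeffreys prior is $\mathcal{D}ir(\frac{1}{2},\dots,\frac{1}{2})$, and then invokes the missing-information principle --- the observed-data Fisher information equals the complete-data information minus a positive semidefinite term --- together with monotonicity of determinants under the positive semidefinite ordering, to conclude $\pi^J(p)\le c\prod_\ell p_\ell^{-1/2}$. You land on exactly the same Dirichlet envelope, but by direct linear algebra on the observed-data information itself: the determinant identity $|I(\mathbf{p})|=|M(\mathbf{p})|$, which your augmented matrix $K=[J\,|\,\mathbf{p}]$ with $\det K=1$ and $K^{\mathrm{T}}MK=\mathrm{diag}\bigl(I(\mathbf{p}),1\bigr)$ establishes cleanly (the key facts $M\mathbf{p}=\mathbf{1}$ and hence $J^{\mathrm{T}}M\mathbf{p}=0$, $\mathbf{p}^{\mathrm{T}}M\mathbf{p}=1$ all check out), followed by Hadamard's inequality for the positive semidefinite Gram matrix $M$ and the pointwise bound $g\ge p_\ell f_\ell$. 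Your route is more self-contained: it avoids both the missing-information principle (whose positive semidefiniteness the paper asserts rather than verifies) and the complete-data computation (where the paper's algebra is loose, and where there is a dimension mismatch between the $(k-1)\times(k-1)$ observed-data matrix and the $k\times k$ complete-data one); moreover, your closing observation that crude diagonal bounds on $I(\mathbf{p})$ alone produce a $p_k^{-(k-1)/2}$ singularity, integrable only for $k\le 2$, so that the symmetrisation through $M$ is what makes the argument work for every $k$, is a genuine insight absent from the paper. What the paper's approach buys in exchange is interpretability: the bounding kernel is identified as the Jeffreys prior of the completed multinomial experiment, which ties the result to Bernardo--Gir\`on and to the Rousseau--Mengersen conditions on Dirichlet hyperparameters exploited later, and the same reasoning delivers the limiting uniform prior when all components coincide. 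Both arguments use only that the $f_\ell$ are densities, so both cover arbitrary known components, not just Gaussian ones.
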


\begin{proof}
The generic element of the Fisher information matrix $I(p)$ of the mixture model \eqref{eq:theMix} when the weights are the only unknown parameters is (for $j,h=\{1,\ldots,k-1\}$)
\begin{equation}
\label{eq:ww-prior}
\int_\mathcal{X} \frac{(f_j(x)-f_k(x))(f_h(x)-f_k(x))}{\sum_{\ell=1}^k p_\ell f_\ell(x)}
\dd x
\end{equation}
when we consider the parametrization in $(p_1,\ldots,p_{k-1})$, with
$$
p_k=1-p_1-\cdots-p_{k-1}\,.
$$
Consider now a data augmented model, where a latent variable describing the allocations of each observation to the particular component is introduced. In other words, a latent variable $z_i$ is considered such that $z_i=(0\cdots 1\cdots 0)$,
%
%
where $z_{i\ell} = 1$ in the $\ell$-th position of the vector if $x_i$ has been generated from the
$\ell$-th components, for $i=1,\cdots,n$ where $n$ is the sample size and
$\ell=1,\cdots,k$.  Therefore, $z=(z_1,\ldots,z_n)$ is a multinomial variable for $k$ possible
outcomes such that 
\begin{align}
\label{eq:lik-da}
g(x , z | \theta, p) & = g(x | z, \theta, p) g(z | \theta , p) \nonumber = \prod_{i=1}^n g(x_i | z_i, \theta , p) g(z_i | \theta , p )  \\
							& = \prod_{i=1} ^n \prod_{\ell=1}^k \left[ f_\ell(x_i | \theta_\ell) p_\ell \right]^{\mathbb{I}_{[z_{i, \ell]}=1}}  = \prod_{\ell=1}^k \left[ \prod_{i: z_{i,\ell}=1} f_l(x_i | \theta_\ell)\right] \left[ \prod_{\ell=1}^k p_\ell ^{n_\ell} \right] 
\end{align} 
\noindent where $\mathbb{I}_{[z_{i,\ell}=1]}$ is the indicator function that $z_{i,\ell}=1
$ and $n_\ell$ is the number of allocations to the $\ell$-th component. For an
extensive review of the techniques of data augmentation in the case of mixture
models one may refer to  \cite{fruhwirth:2006}.

Equation \eqref{eq:lik-da} shows that the likelihood function is separable for
$\theta$ and $p$ and that the second part is multinomial. Therefore, when
looking for the Jefffreys prior for the weights of a complete (data-augmented)
mixture model, the elements of the Fisher information matrix are
\begin{align*}
- \mathbb{E}\left[\frac{\partial^2}{\partial p_\ell^2} \log g(x,z|\theta,p)\right] & = - \frac{n_\ell n p_\ell}{p_\ell^2} = \frac{c}{p_\ell}\\
- \mathbb{E}\left[ \frac{\partial^2}{\partial p_\ell \partial p_j} \log g(x,z|\theta,p)\right] & = 0
\end{align*}
\noindent leading to the usual Jeffreys prior associated to the multinomial model, a Dirichlet distribution $\mathcal{D}ir(\frac{1}{2}, \cdots, \frac{1}{2})$. 

The above only applies to the artificial case when the allocations $z_i$ are known. When they are unknown, it is easy to see that the log-likelihood function becomes
\begin{equation}
\label{eq:lik-da}
\log g(x|\theta, p) = \log g(x, z | \theta, p) - \sum_{i=1}^n \sum_{\ell=1}^k \mathbb{I}_{[z_{i,\ell}=1]} \log p(z_{i,\ell}=1 | x_i , \theta, p)
\end{equation}
\noindent where the second term on the right side of the equation represents the loss of information compared to the data-augmented likelihood function. Define the expected Fisher information matrix for model \eqref{eq:lik-da} (when only the weights are unknown) as $I^{data-aug} (p,\theta)$.
%
%
Therefore, since the difference between both matrices is positive definite, this implies that
\begin{align*}
\det(I(p)) &\leq \det(I^{data-aug} (p)) \\
\left[\det(I(p))\right]^{1/2} &\leq \left[\det(I^{data-aug}(p))\right]^{1/2}  \\
\pi_J(p) &\leq \pi_J^{data-aug}(p)
\end{align*}  
This results shows that the Jeffreys prior on the weights of a mixture model
when allocations are unknown is proper since bounded by the Jeffreys prior
$\mathcal{D}ir(\frac{1}{2}, \cdots, \frac{1}{2})$ for the complete model.

As a particular case, when all the mixands converge to the same distribution,
each of the elements of the form \eqref{eq:ww-prior} tends to 
\begin{equation*}
\int_\mathcal{X} \frac{(f_j(x)-f_k(x))(f_\ell(x)-f_k(x))}{f_j(x)} \dd x
\end{equation*}
\noindent which does not depend on $p$. Therefore, in this case, the
determinant of the deriving Fisher information matrix is  constant in
$p=(p_1,\cdots, p_k)$ and the resulting Jeffreys prior is uniform on the
$k$-dimensional simplex.

\end{proof}

We note that this result is a generalization to a $k$-component mixture of the
prior derived in \cite{bernardo:giron:1988} for $k=2$ (however, these authors
derive the reference prior for the limiting cases when
all the components have pairwise disjoint supports and when all the components converge to the same distribution). 
This reasoning led \cite{bernardo:giron:1988} to conclude that the usual
$\mathcal{D}(\lambda_1,\ldots,\lambda_k)$ Dirichlet prior with $\lambda_\ell \in
[\nicefrac{1}{2},1]$ for $\forall \ell=1,\cdots,k$ seems to be a reasonable
approximation. They also prove that the Jeffreys prior for the weights $p_\ell$ is
convex, with an argument based on the sign of the second derivative. 

It is important to stress that, in a mixture model setting, it is usual to saturate the model when the number of components is not surely known \textit{a priori} and consider a large number of components $k$. The main difficulty in this setting is non-identifiability, in particular the rate of estimation for the satured model is much slower than the standard $1 / \sqrt{n}$.  \cite{rousseau:mengersen:2011} have studied the effect of a prior distribution on the weights of a general mixture on regularizing the posterior distribution, i.e. consistency to a single configuration of the reduced parameter space. This is achievable with a prior which allows to empty the extra-components or to merge the existing ones. In particular, \cite{rousseau:mengersen:2011} propose a Dirichlet prior distribution, with parameters $\lambda_1,\cdots, \lambda_k$ smaller than $r / 2$ (where $r$ is the dimension of $\theta_\ell$) to empty the extra-components or larger than $r / 2$ to merge the extra-components. However, the choice of $\lambda_j \, (j=1,\cdots,k)$ is quite influential for finite sample sizes. 
The configuration studied in the proof of Lemma \ref{lem:weights} is
compatible with the Dirichlet configuration of the prior proposed by \cite{rousseau:mengersen:2011}.  
This is an important property of the Jeffreys prior, since it makes
the prior conservative in the number of the components. Namely, one can asymptotically
identify the components that are artificially added to the model but have no
meaning for the data. Moreover, it offers an automatic choice, on the contrary of the Dirichlet prior where the hyper-parameters have to been chosen. 

The shape of the Jeffreys prior for the weights of a mixture model depends on
the type of the components: see Appendix A of the Supplementary
Material for a discussion.
The marginal Jeffreys prior for the weight of one component is more
concentrated around one if that component is more informative in terms of
Fisher information matrix: for example, if we consider a two-component mixture
model with a Gaussian and a Student \textit{t} component, the Jeffreys prior for the
weights will be more symmetric as the number of degrees of freedom of the
Student \textit{t} increases.  

\subsection{Weights, location and scale parameters of a mixture model unknown}

In this Section we will consider mixtures of location-scale distributions. If the components of the mixture model \eqref{eq:theMix} are distributions from a location-scale family and the location or scale parameters of the mixture components are unknown, this turns the mixture itself into a location-scale model: 
\begin{equation}
\label{eq:mix-locscale}
p_1 f_1(x|\mu,\tau)+\sum_{\ell=2}^k p_\ell f_\ell(\frac{a_\ell+ x}{b_\ell} |\mu,\tau,a_\ell,b_\ell).
\end{equation}
As a result, model \eqref{eq:theMix} may be reparametrized following
\cite{mengersen:robert:1996}, in the case of Gaussian components
\begin{equation}
\label{reparMix}
p\mathcal{N}(\mu,\tau^2)+(1-p)\mathcal{N}(\mu+\tau\delta,\tau^2\sigma^2)
\end{equation}
\noindent namely using a reference location $\mu$ and a reference scale $\tau$ (which may be, for instance, the location and scale of a specific component). Equation \eqref{reparMix} may be generalized to the case of $k$ components as
\begin{align*}
& p\mathcal{N}(\mu,\tau^2) + \sum_{\ell=1}^{k-2} (1-p) (1-q_1) \cdots (1-q_{\ell-1})q_\ell \\
& \quad  \cdot \mathcal{N}(\mu+\tau\theta_1+\cdots+\tau\cdots\sigma_{\ell-1}\theta_\ell,\tau^2\sigma_1^2\cdots\sigma_\ell^2) + \\
& + (1-p)  (1-q_1)\cdots (1-q_{k-2}) \\
& \quad \cdot \mathcal{N}(\mu+\tau\theta_1+\cdots+\tau\cdots\sigma_{k-2}\theta_{k-1},\tau^2\sigma_1^2\cdots\sigma_{k-1}^2).
\end{align*}
Since the mixture model is a location-scale model, the Jeffreys prior is 
as in the following Lemma (see also \cite[Chapter 3]{robert:2001}).

\begin{lemma}
\label{lem:prior}
When the parameters of a location-scale mixture model are unknown, the Jeffreys prior is improper, constant in $\mu$ and powered as $\tau^{-d/2}$, where $d$ is the total number of unknown parameters of the components (i.e. excluding the weights). 
\end{lemma}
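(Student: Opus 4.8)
The plan is to exploit the location-scale structure that the reparametrisation \eqref{reparMix} makes explicit. Separating the reference location $\mu$ and reference scale $\tau$ from the remaining parameters, which I collect into a single symbol $\eta$ gathering the weights together with the relative locations and relative scales, the whole mixture density takes the canonical form
\begin{equation*}
g(x\mid\mu,\tau,\eta)=\frac{1}{\tau}\,h\!\left(\frac{x-\mu}{\tau}\,\Big|\,\eta\right),
\end{equation*}
where $h(\cdot\mid\eta)$ is the standardised mixture (reference component centred at $0$ with unit scale). The point I would stress is that $\eta$ is invariant under the joint action $x\mapsto ax+b$, so the only coordinates carrying the dimension of $x$ are $\mu$ and $\tau$. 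First I would record $\log g=-\log\tau+\log h(y\mid\eta)$ with $y=(x-\mu)/\tau$, together with the three types of score, $\partial_\mu\log g=-\tau^{-1}\partial_y\log h$, $\partial_\tau\log g=-\tau^{-1}\{1+y\,\partial_y\log h\}$ and $\partial_{\eta_j}\log g=\partial_{\eta_j}\log h$.

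The heart of the argument is the behaviour of the Fisher information under the substitution $y=(x-\mu)/\tau$, which converts every expectation $\E_g[\,\cdot\,]$ into an expectation $\E_h[\,\cdot\,]$ over the standardised variable. Carrying this out term by term establishes two facts simultaneously: the resulting integrals no longer involve $\mu$, since $\mu$ entered only through $y$; and each entry of $I$ factors out an explicit power of $\tau$, namely $\tau^{-2}$ for the entries indexed by $(\mu,\tau)$, $\tau^{-1}$ for the mixed location/scale-versus-$\eta$ entries, and $\tau^{0}$ for the $(\eta_j,\eta_h)$ block. I would encode this uniformly as
\begin{equation*}
I(\mu,\tau,\eta)=D(\tau)\,\tilde I(\eta)\,D(\tau),\qquad D(\tau)=\mathrm{diag}\!\left(\tau^{-1},\ldots,\tau^{-1},1,\ldots,1\right),
\end{equation*}
with $\tilde I(\eta)$ independent of both $\mu$ and $\tau$, one factor $\tau^{-1}$ being attached to each coordinate carrying the dimension of $x$.

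From this factorisation both claims follow by a single determinant evaluation: $\det I=(\det D)^2\det\tilde I(\eta)$, so that $|I|^{\oh}=|\det D(\tau)|\,|\tilde I(\eta)|^{\oh}$ is constant in $\mu$ and equal to a fixed negative power of $\tau$ times a function of $\eta$ alone. Reading off the exponent of $\tau$ amounts to counting the dimensional directions tracked by $D(\tau)$, and matching this count against the $d$ component parameters reproduces the power $\tau^{-d/2}$ asserted in the statement. Improperness is then immediate and does not even use the $\tau$-behaviour: since $\pi^{\mathrm J}$ is constant in $\mu$ while $\mu$ ranges over all of $\mathbb{R}$, the mass $\int_{\mathbb{R}}\dd\mu$ already diverges. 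I expect the only delicate point to be the justification that $\tilde I(\eta)$ is finite and non-singular, so that $|I|^{\oh}$ is genuinely a strictly positive power of $\tau^{-1}$: the Fisher integrals for a mixture involve ratios of the component densities to the mixture density \eqref{eq:mix-locscale} and must be checked to converge, and one must verify that the standardised information $\tilde I(\eta)$ does not degenerate — precisely where the regularity of the location-scale family and the identifiability of $h(\cdot\mid\eta)$ enter. The remaining bookkeeping that pins down the exact exponent is then routine.
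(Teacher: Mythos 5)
Your factorisation $I(\mu,\tau,\eta)=D(\tau)\,\tilde I(\eta)\,D(\tau)$ is the same mechanism the paper uses in its proof (Appendix B): there, constancy in $\mu$ is obtained by the change of variable $t=x-\mu_i$ (the information depends on the locations only through their differences), and the $\tau$-power is obtained, for the two-component reparametrised model \eqref{reparMix}, by tabulating the power of $\tau$ carried by each entry ($\tau^{-2}$ on the $(\mu,\tau)$ block, $\tau^{-1}$ on the mixed block, $\tau^{0}$ on the rest) and applying the block identity $\det M=\det(A-BD^{-1}C)\det(D)$. Your version does this in one stroke for a general location-scale mixture instead of via Gaussian-specific computations, and both your improperness argument (infinite mass in $\mu$ alone) and your caveat about finiteness and non-degeneracy of $\tilde I(\eta)$ are sound.

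The gap is in the exponent, exactly the step you dismiss as routine bookkeeping. Since $\eta$ is by construction invariant under $x\mapsto ax+b$, the only coordinates carrying the dimension of $x$ are $\mu$ and $\tau$; hence your $D(\tau)$ has exactly two entries equal to $\tau^{-1}$, so $\lvert\det D(\tau)\rvert=\tau^{-2}$ and your argument delivers $\pi^{\mathrm{J}}\propto\tau^{-2}\,\lvert\tilde I(\eta)\rvert^{1/2}$ for \emph{every} number of components $k$. This coincides with the asserted $\tau^{-d/2}$ only when $d=4$, i.e. $k=2$; for $k$ components $d=2k$ and the lemma claims $\tau^{-k}$, which no ``count of dimensional directions'' can produce, because that count is $2$ and does not grow with $d$. (Nor can you salvage it by putting more $\tau^{-1}$'s into $D(\tau)$: that would contradict the entrywise powers you yourself derived, since the $(\eta_j,\eta_h)$ block is $\tau^{0}$.) So, as written, the proposal proves the statement only for two-component mixtures. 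For what it is worth, the paper's own proof shares this blind spot — it establishes $\tau^{-2}=\tau^{-d/2}$ honestly for $k=2$ and then merely asserts the generalisation ``$d=2k$, prior $\propto\tau^{-k}$'' — and your cleaner factorisation in fact exposes the tension: in this parametrisation the exponent stays at $-2$ for all $k$, so the general-$k$ power in the statement cannot be obtained along this route.
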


An new version of the proof, never presented before, is available in Appendix B of the Supplementary Material, while the characterization of the Jeffreys prior for $\delta$ is given in Appendix C. 

We now derive analytical characterizations of the posterior distributions associated with the Jeffreys priors for mixture models. 

Consider, first, the case where only the location parameters of a mixture model are unknown. 

There is a substantial difference between the cases where $k=2$ or $k>2$.

\begin{lemma} 
When $k=2$, the posterior distribution derived from the Jeffreys prior when only the location parameters of model \eqref{eq:mix-locscale} are unknown is proper.
\label{lem:mean-post}
\end{lemma}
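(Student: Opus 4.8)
The plan is to prove properness by showing directly that the normalising constant of the posterior,
$$
m(\bx) = \int \pi_J(\mu_1,\mu_2)\prod_{i=1}^n\Big[p\,f_1(x_i\mid\mu_1) + (1-p)\,f_2(x_i\mid\mu_2)\Big]\,\dd\mu_1\,\dd\mu_2 ,
$$
is finite, where by Lemma \ref{lem:prior} the Jeffreys prior $\pi_J$ is flat in the global location and therefore can be written as a function of the component separation $\delta=\mu_2-\mu_1$ alone. First I would expand the product over the sample into its allocation terms, rewriting the integrand as the finite sum over subsets $S\subseteq\{1,\dots,n\}$ of
$$
p^{|S|}(1-p)^{n-|S|}\Big[\prod_{i\in S}f_1(x_i\mid\mu_1)\Big]\Big[\prod_{i\notin S}f_2(x_i\mid\mu_2)\Big] ,
$$
which is exactly the decomposition over partitions of the data alluded to in Section \ref{sec:jeffreys}. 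Since the sum has finitely many terms, it suffices to bound the integral of each one separately.

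Next I would classify the terms by which components are left without observations. For every $S$ with $S\neq\emptyset$ and $S\neq\{1,\dots,n\}$ both locations are informed by at least one datum, and since a single observation already makes the location integral of a location-family density converge, the corresponding contribution is finite: bounding $\pi_J$ by its supremum (it is continuous and tends to a constant as $|\delta|\to\infty$), the two-dimensional integral factorises into two finite one-dimensional integrals. The only delicate terms are the two extreme allocations $S=\emptyset$ and $S=\{1,\dots,n\}$, in which one component is emptied and its location is no longer constrained by the likelihood — precisely the mechanism, flagged in the Introduction, by which improper priors usually destroy properness in mixtures.

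The main obstacle is to control these two empty-component terms. Consider $S=\emptyset$: the integrand does not depend on $\mu_1$, so after the change of variables $(\mu_1,\mu_2)\mapsto(\delta,\mu_2)$, whose Jacobian is unity, the contribution factorises as $\big[\int \pi_J\,\dd\delta\big]\cdot(1-p)^n\int\prod_i f_2(x_i\mid\mu_2)\,\dd\mu_2$. The second factor is finite for $n\ge 1$, so everything hinges on the integrability at infinity of the Jeffreys prior in the separation $\delta$, which is exactly the quantity characterised in Appendix C; I would invoke that characterisation to obtain $\int\pi_J\,\dd\delta<\infty$, and argue symmetrically for $S=\{1,\dots,n\}$. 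Establishing this tail behaviour is the crux of the argument, and I expect it to be the step that genuinely has to be checked rather than waved through.

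Finally I would explain why the argument is special to $k=2$, matching the announced contrast with $k>2$. Because all $n$ observations must be allocated among the components, for $k=2$ at most one component can be emptied in any allocation, so each problematic term leaves exactly one free location and the single one-dimensional tail $\int\pi_J\,\dd\delta$ suffices. For $k>2$ there are allocations that empty two or more components simultaneously, producing a free location subspace of dimension $\ge 2$ along which the Jeffreys prior — flat in the global location and decaying only in the pairwise separations — fails to be integrable; that is where I expect properness to break down, and it is the reason the two regimes must be treated separately.
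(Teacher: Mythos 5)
Your decomposition over allocations is sound, and your identification of the two extreme allocations as the only delicate terms matches the structure of the paper's own proof (Appendix D of the Supplementary Material), which likewise isolates the term $p_2^n e^{-\frac{1}{2}\sum_j(x_j-\mu_2)^2}$ that survives as $|\mu_1|\to\infty$. But the step you yourself flagged as the crux is exactly where your argument fails: you need $\int\pi_J(\delta)\,\dd\delta<\infty$, and Appendix C establishes precisely the opposite. As $|\delta|\to\infty$ the two components separate, the Fisher information matrix tends to a nondegenerate diagonal limit (off-diagonal terms vanish, diagonal terms tend to positive constants), so $\pi^J(\delta)$ tends to a strictly positive constant; Appendix C's stated conclusion is that the conditional Jeffreys prior on $\delta$ is \emph{improper}, i.e.\ $\int\pi^J(\delta|\mu)\,\dd\delta=+\infty$, and the main text after Lemma \ref{lem:mean-post} repeats that the prior is ``constant with respect to $\delta$'' in the tails. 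So the tail integrability you hoped to invoke is contradicted by the very result you cite, and your bound for the empty-component terms yields $+\infty$, not a finite number.

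What is instructive is that your factorisation of those terms is correct, and it is the paper's treatment of them that is not. Appendix D integrates the empty-component term over $\mu_1$ first and claims a residual factor $\approx e^{-\frac{n-1}{2}\delta^2}$ survives to dominate the non-integrable prior; but the Gaussian integral there is $\int e^{\mu_1 b-\frac{n}{2}\mu_1^2}\,\dd\mu_1\propto e^{b^2/(2n)}$ with $b=\sum_j(x_j-\delta)$, not $e^{b/(2n)}$ as written. With the square restored, the exponent collapses to $-\frac{1}{2}\sum_j(x_j-\bar{x})^2$, with no $\delta$-dependence at all --- exactly as your change of variables makes transparent, since a likelihood term depending only on $\mu_2$, integrated against a prior depending only on $\delta=\mu_2-\mu_1$, factorises exactly into $\bigl[\int\prod_j f_2(x_j|\mu_2)\,\dd\mu_2\bigr]\cdot\bigl[\int\pi_J(\delta)\,\dd\delta\bigr]$. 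The first factor is a finite positive constant and the second is infinite, and since every term in the allocation sum is nonnegative, the marginal likelihood diverges. In other words, your approach carried out correctly does not prove Lemma \ref{lem:mean-post}; it refutes it. The gap in your proposal is therefore real (the integrability claim is false), but no argument along these lines can close it: the defect lies in the lemma itself and in the algebraic slip in the paper's Appendix D, not in your choice of decomposition.
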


The complete proof of lemma \ref{lem:mean-post} is given in Appendix D of the Supplementary Material. Here it is worth noticing that the properness of the posterior distribution in the context of Lemma \ref{lem:mean-post} depends on the representation of the mixture model as a location-scale distribution, where the second component is defined with respect to a reference component: if we focus the attention on the part of the likelihood depending only on the second component, even if the prior is constant with respect to the difference between the location parameters $\delta$ as $\delta \rightarrow \pm \infty$, the likelihood depends on $\delta$ as $\exp(-\frac{n-1}{2}\delta^2)$ and therefore the behavior of the posterior distribution is convergent. 

Figure \ref{fig:mean-priorpost} shows an approximation of the Jeffreys prior for the location parameters of a two-component Gaussian mixture model on a grid of values and confirms that the prior is constant on the difference between the means and takes higher and higher values as the difference between them increases, while the posterior distribution, even if showing the classical  multimodal nature \citep{celeux:hurn:robert:2000}, seems to concentrate around the true modes. It also appears to be perfectly symmetric because the other parameters (weights and standard deviations) have been fixed as identical. 

\begin{figure}
\centering
\includegraphics[width=10cm, height=7.5cm]{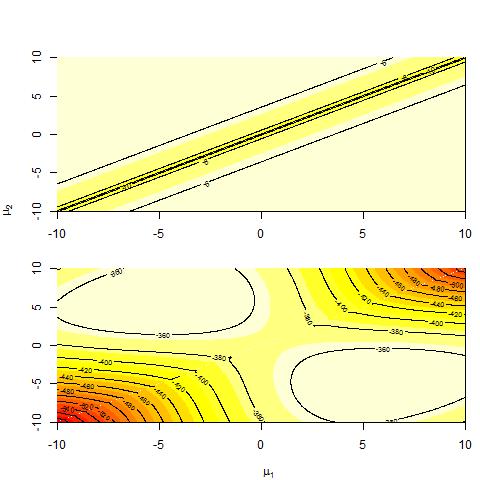}
\caption{Approximations (on a grid of values) of the Jeffreys prior (on the log-scale) when only the means of a Gaussian mixture model with two components are unknown above and of the derived posterior distribution (with known weights both equal to 0.5 and known standard deviations both equal to 5) below.}
\label{fig:mean-priorpost}
\end{figure}

The same proof cannot be extended to the general case of $k$ components, because the location parameters are defined as several distances from the reference location parameter: if we again focus the attention on the part of the likelihood depending on the second component, the integral with respect to $\delta_2$ converges, however the prior is constant with respect to any other $\delta_j$ ($j =3,\cdots,k$) as $\delta_j \rightarrow \pm \infty$ and the integral does not converge with respect to the other differences. Then the following Lemma holds (the formal proof is available in Appendix E).

\begin{lemma}
When $k>2$, the posterior distribution derived from the Jeffreys prior is improper when only the location parameters of model \eqref{eq:mix-locscale} are unknown.
\end{lemma}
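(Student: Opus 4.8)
The plan is to prove improperness by reducing it to the divergence of a single one-dimensional integral, exploiting that every allocation term in the mixture likelihood is non-negative, so that exhibiting one non-integrable direction already forces the normalising constant to be infinite. By Lemma~\ref{lem:prior} the Jeffreys prior, when only the locations of \eqref{eq:mix-locscale} are unknown, is constant in the reference location $\mu$ and in the differences $\delta_2,\dots,\delta_k$; moreover, along any ray on which one component separates from the rest the Fisher information determinant tends to a strictly positive constant, so the prior is bounded below by a positive constant there. It therefore remains only to locate a direction in which the likelihood also fails to decay.

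The benchmark to beat is the $k=2$ computation behind Lemma~\ref{lem:mean-post}: there the unique difference $\delta_2$ is rescued because the factor of the likelihood attached to the second component contributes a Gaussian weight $\exp(-\tfrac{n-1}{2}\delta_2^2)$ that overwhelms the flat prior and makes the $\delta_2$-integral finite. The entire point of the case $k>2$ is that this protective decay localises on $\delta_2$ alone and cannot cover the remaining differences $\delta_3,\dots,\delta_k$.

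Concretely, I would send the location displaced by the last difference, $\delta_k$, to $+\infty$ while freezing $\mu,\delta_2,\dots,\delta_{k-1}$ at values for which the first $k-1$ components (a genuine mixture, since $k-1\ge 2$) already fit the sample. Along this ray each mixture density $\sum_{\ell=1}^{k}p_\ell f_\ell(x_i)$ loses only its vanishing $k$-th summand, so the observed likelihood converges to the strictly positive likelihood of the induced $(k-1)$-component mixture and is bounded below by a positive constant independent of $\delta_k$ on a full-dimensional neighbourhood of the frozen configuration. Since the prior is simultaneously bounded below in $\delta_k$, the integral over $\delta_k\in\mathbb{R}$ of the product of likelihood and prior diverges, and by non-negativity of the remaining allocation terms this forces improperness.

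The delicate step, and the main obstacle, is to justify the asymmetry between $\delta_2$ and the higher differences: I must show that the Gaussian decay of Lemma~\ref{lem:mean-post} genuinely arises from the reference-versus-second-component structure of \eqref{eq:mix-locscale} and is unavailable for $\delta_k$, and that the Jeffreys prior does not itself decay fast enough along the emptying ray to restore integrability, i.e. that $\det I$ stays bounded away from $0$ rather than collapsing as the $k$-th component empties. This is precisely where the cumulative reparametrisation matters, since it guarantees that perturbing $\delta_k$ moves a single component while leaving the information carried by the other $k-1$ components, and hence the lower bound on the prior, intact.
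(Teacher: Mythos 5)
Your proof is correct and reaches the paper's conclusion by a genuinely different mechanism. The paper's proof (Appendix E) works inside the allocation expansion of the likelihood: it retains only the term in which all observations are assigned to the second component, integrates out the reference location as in the proof of Lemma~\ref{lem:mean-post} so that this term contributes the factor $e^{-\frac{n-1}{2}\delta_2^2}$, and then notes that this term is constant in $\delta_3,\dots,\delta_k$ while the prior tends to a constant in each of these directions, so the iterated integral over $\delta_3,\dots,\delta_k$ diverges. You never split the likelihood: you send the single difference $\delta_k$ to infinity with all other parameters confined to compact neighbourhoods, argue that the full likelihood converges to the strictly positive likelihood of the induced $(k-1)$-component mixture, and that the prior stays bounded below because $\det I$ has a strictly positive limit as the $k$-th component separates; a positive lower bound for likelihood times prior on a region of infinite Lebesgue measure then forces impropriety. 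Both arguments rest on the same two pillars --- non-negativity of the integrand, so a divergent sub-region integral suffices, and asymptotic flatness of the Jeffreys prior in an escaping difference --- but the trade-off is real: the paper's route is shorter because it recycles the $k=2$ computation, while yours is more robust and more general, since it makes no use of the Gaussian-specific decay in $\delta_2$ (hence applies verbatim to any location-scale components) and it converts the paper's implicit assumption that the prior does not collapse along the emptying ray into an explicit, verifiable obligation. One small inaccuracy: Lemma~\ref{lem:prior} asserts constancy of the prior in $\mu$ only, not in the differences $\delta_2,\dots,\delta_k$, which the prior does depend on; but the property you actually invoke afterwards --- a positive lower bound along the separating ray --- is the correct one, so nothing in your argument breaks.
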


This result confirms the idea that each part of the likelihood  gives
information about at most the difference between the locations of the respective
components and the reference location, but not on the locations of the other
components. 

We can now consider the case where all the parameters of \eqref{eq:mix-locscale} are unknown. 

\begin{theorem}
The posterior distribution of the parameters of a mixture model with location-scale components derived from the Jeffreys prior when all parameters of model \eqref{eq:mix-locscale} are unknown is improper.
\label{lem:all-post}
\end{theorem}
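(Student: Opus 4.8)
The plan is to prove impropriety by exhibiting a single direction in parameter space along which the integral of the Jeffreys prior against the likelihood diverges, and to argue that this direction is present for every $k\ge 2$ — in particular that it reinstates impropriety in the case $k=2$ which was \emph{proper} under Lemma \ref{lem:mean-post}. First I would record the shape of the prior from Lemma \ref{lem:prior}: on the reparametrised model \eqref{eq:mix-locscale}--\eqref{reparMix} the Jeffreys prior is flat in the reference location $\mu$, decays as $\tau^{-d/2}$ in the reference scale, and — this is the crucial point — remains improper in each relative scale parameter $\sigma_\ell$, whose marginal behaves in the tail like an ordinary scale prior $\sigma_\ell^{-1}$.

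Next I would reuse the allocation decomposition from the proof of Lemma \ref{lem:weights}: writing $\prod_i \sum_\ell p_\ell f_\ell(x_i)$ as a sum over the $k^n$ allocations $z$, every summand is nonnegative, so it suffices to find \emph{one} allocation whose contribution to the marginal integrates to $+\infty$. I would single out the allocation that assigns no observation to one component, say the last; such a term always appears in the expansion of a mixture likelihood. Along this term the data are explained entirely by the remaining $k-1$ components, so the integrand is constant in both the relative location $\delta_k$ and the relative scale $\sigma_k$ of the empty component.

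The divergence then comes from integrating the prior over the unconstrained scale $\sigma_k$. This is precisely where the difference from Lemma \ref{lem:mean-post} enters: in the location-only case the scales were fixed and only $\int \dd\delta$ was free, and that single integral converges; here $\sigma_k$ is free, its marginal Jeffreys prior is not integrable at the boundary, and the empty-component likelihood does not penalise it, so the contribution factorises as a finite positive constant times $\int \sigma_k^{-1}\,\dd\sigma_k=+\infty$. Fixing the remaining parameters in a box of positive measure around admissible values turns this into a genuine divergence of the full posterior integral, yielding impropriety for every $k\ge 2$.

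The step I expect to be the main obstacle is justifying that the marginal of the Jeffreys prior in the empty component's scale is genuinely non-integrable — that, as that component decouples from the data, the corresponding block of the Fisher information reduces to the single-component scale information and contributes a $\sigma_k^{-1}$ factor which does not decay. This requires the asymptotic analysis of the information matrix underlying Lemma \ref{lem:prior} and the characterisation of Appendix C, together with control of the coupling between $\delta_k$ and $\sigma_k$ as $\sigma_k\to\infty$ (or $\sigma_k\to 0$, which reproduces the classical degenerate-component blow-up of the likelihood and reinforces the same conclusion).
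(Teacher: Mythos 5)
Your overall architecture matches the paper's proof (Appendix F): expand the mixture likelihood over allocations, keep the nonnegative term in which one component is empty, and show that the prior integrated against that term diverges in the empty component's scale direction. The paper does exactly this, bounding the Jeffreys prior by the product of the diagonal Fisher entries, reparametrising with $\tau=\sigma_1$ and $\sigma=\sigma_2/\sigma_1$, and isolating a prefactor $\sigma^{-2}$ multiplying Fisher-information integrals that remain bounded, so that the term of the posterior in which all observations are allocated to the first component carries a non-integrable singularity in $\sigma$.

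However, your primary divergence mechanism is wrong. You claim that, as a function of the empty component's scale $\sigma_k$, the Jeffreys prior behaves in the tail like an ordinary scale prior $\sigma_k^{-1}$, so that the divergence occurs as $\sigma_k\to\infty$. This is not what the Fisher information gives: when all parameters are unknown the empty component contributes a $2\times 2$ block for $(\mu_k,\sigma_k)$, and as $\sigma_k\to\infty$ \emph{each} of its diagonal entries decays like $\sigma_k^{-2}$ (a flat component carries vanishing information; in the region where component $k$ dominates the mixture one recovers the single-Gaussian rate $p_k\sigma_k^{-2}$, and elsewhere the contribution is of even smaller order). Hence the prior decays like $\sigma_k^{-2}$, which is integrable at infinity, and no divergence occurs there. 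The genuine divergence --- the one the paper exploits --- is at $\sigma_k\to 0$, where the block behaves like $p_k\,\mathrm{diag}(\sigma_k^{-2},2\sigma_k^{-2})$, so the prior grows like $\sigma_k^{-2}$ while the empty-component likelihood term is constant in $\sigma_k$, giving $\int_0 \sigma_k^{-2}\,\dd\sigma_k=\infty$. Your parenthetical fallback does mention $\sigma_k\to 0$, but attributes the divergence to the ``classical degenerate-component blow-up of the likelihood'', which is a different phenomenon: along the empty-component allocation the likelihood does not blow up at all (it is flat in $\sigma_k$); it is the \emph{prior} that is non-integrable there. So, as written, your main argument fails and your fallback rests on a mischaracterised mechanism; once the asymptotics of the empty component's information block are done correctly, the argument collapses onto the paper's proof.
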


The proof is available in Appendix F of the Supplementary Material. 

\section{A noninformative alternative to Jeffreys prior}
\label{sec:alternative}

The information brought by the Jeffreys prior or lack thereof does not seem to
be enough to conduct inference in the case of mixture models. The computation
of the determinant creates a dependence between the elements of the Fisher
information matrix in the definition of the prior distribution which makes it
difficult to find and justify moderate modifications of this prior that would
lead to a proper posterior distribution. For example, using a proper prior for
part of the scale parameters and the Jeffreys prior conditionally on them does
not avoid impropriety, as it is shown Appendix G of the Supplementary Material.

The literature covers attempts to define priors that add a small amount of information that is sufficient to conduct
the statistical analysis without overwhelming the information contained in the data. Some of these are related to the
computational issues in estimating the parameters of mixture models, as in the approach of
\cite{casella:mengersen:robert:titterington:2002}, who finds a way to use perfect slice sampler by focusing on components
in the exponential family and conjugate priors. A characteristic example is given by \cite{richardson:green:1997}, who
proposes weakly informative priors, which are data-dependent (or
empirical Bayes) and are represented by flat normal priors over an interval corresponding to the range of the data.
Nevertheless, since mixture models belong to the class of ill-posed problems, the influence of a proper prior over the
resulting inference is difficult to assess.

Another solution found in \cite{mengersen:robert:1996} proceeds through the reparametrization \eqref{reparMix} and
introduces a reference component that allows for improper priors. This approach then envisions the other parameters 
as departures from the reference and ties them together by considering each parameter $\theta_\ell$ as a perturbation of
the parameter of the previous component $\theta_{\ell-1}$. This perspective is justified by the argument that the $(\ell-1)$-th
component may not be informative enough to absorb all the variability in the data. For instance, a three-component 
mixture model gets rewritten as
\begin{align*}
p\mathcal{N}(\mu,\tau^2)&+(1-p)q\mathcal{N}(\mu+\tau\theta,\tau^2\sigma_1^2) \\
						&\quad {} + (1-p)(1-q)\mathcal{N}(\mu+\tau\theta+\tau\sigma\epsilon,\tau^2\sigma_1^2\sigma_2^2)
\end{align*} 
\noindent where one can impose the constraint $1 \geq \sigma_1 \geq \sigma_2$
for identifiability reasons. Under this representation, it is possible to use
an improper prior on the global location-scale parameter $(\mu,\tau)$, while
proper priors must be applied to the remaining parameters. This
reparametrization has been used also for exponential components by
\cite{gruet:philippe:robert:1999} and Poisson components by
\cite{robert:titterington:1998}. Moreover,
\cite{roeder:wasserman:1997} proposes a Markov prior which follows the same
reasoning of dependence between the parameters for Gaussian components, where
each parameter is again a perturbation of the parameter of the previous
component $\theta_{\ell-1}$. \cite{kamary:2017} also proposes a reparametrization
of location-scale mixtures based on invariance that allows for weakly informative priors. 

On one hand, this representation suggests to define a global location-scale parameter in a more implicit way, via a hierarchical model that considers more levels in the analysis and choose noninformative priors at the last level in the hierarchy.

On the other hand, we believe that an essential feature of a default prior is that it should let the analysis be able to identify the correct number of meaningful components, in particular in the standard case where an overfitted mixture is assumed because the a priori information on the number of components is weak. 

We thus propose a prior scenario which combines both the hierarchical representation and the conservativeness property in terms of components. 

More precisely, consider the Gaussian mixture model \eqref{eq:theMix}

\begin{equation}
\label{eq:hierarc1}
g(x|\boldsymbol{\theta})=\sum_{\ell=1}^k p_i \mathcal{N}(x|\mu_\ell,\sigma_\ell).
\end{equation}

The parameters of each component may be considered as related in some way; for example, the observations induce a
reasonable range, which makes it highly improbable to face very different means in the above Gaussian mixture model. A
similar argument may be used for the standard deviations. 

Therefore, at the second level of the hierarchical model, we may write

\begin{align}
\label{eq:hierarc2}
\mu_\ell & \stackrel{iid}{\sim} \mathcal{N}(\mu_0, \zeta_0) \nonumber \\
\sigma_\ell & \stackrel{iid}{\sim} \frac{1}{2} \mathcal{U}(0,\zeta_0) + \frac{1}{2}\frac{1}{\mathcal{U}(0,\zeta_0)} \nonumber \\
p|\mu,\sigma & \sim \pi^J(p|\mu,\sigma)
\end{align}

\noindent which indicates that the location parameters vary between components,
but are likely to be close, and that the scale parameters may be smaller or
larger than $\zeta_0$; we have decided to define both $\mu_{\ell}$ and $\sigma_{\ell}$ as depending on hyperparameter $\zeta_0$ without loss of generality, as one may notice by analysing mean and variance of the random variables; this representation allows the application of the MCMC scheme proposed in \cite{robert:mengersen:1999} which allows a better mixing of the chains. 
The mixture weights are given the prior distribution $\pi^J(p|\mu,\sigma)$ which is the Jeffreys prior for the weights, conditional on the location and scale parameters, given in Section \ref{sub:weights}; this choice makes use of the conservative property of the Jeffreys prior for the weights which is essential in the case of
miss-specification of the number of components. 

At the third level of the hierarchical model, the prior may be noninformative:

\begin{align}
\label{eq:hierarc3}
\pi(\mu_0,\zeta_0) \propto \frac{1}{\zeta_0}.
\end{align}

As in \cite{mengersen:robert:1996} the parameters in the mixture model are
considered tied together; on the other hand, this feature is not obtained via a
constrained representation of the mixture model itself, but via a hierarchy in
the definition of the model and the parameters.   

\begin{theorem}
\label{theo:post-altern}
The posterior distribution derived from the hierarchical representation of the Gaussian mixture model 
associated with \eqref{eq:hierarc1}, \eqref{eq:hierarc2} and \eqref{eq:hierarc3}
is proper. 
\end{theorem}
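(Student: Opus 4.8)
The plan is to show that the posterior is proper by proving that its normalising constant, the marginal likelihood, is finite. Writing $\boldsymbol\theta=(p,\mu_1,\dots,\mu_k,\sigma_1,\dots,\sigma_k)$ and $(\mu_0,\zeta_0)$ for the hyperparameters, this amounts to
\[
m(\mathbf{x})=\int \Bigl[\prod_{i=1}^n\sum_{\ell=1}^k p_\ell\,\mathcal{N}(x_i|\mu_\ell,\sigma_\ell)\Bigr]\,\pi^J(p|\mu,\sigma)\prod_{\ell=1}^k\mathcal{N}(\mu_\ell|\mu_0,\zeta_0)\,\pi(\sigma_\ell|\zeta_0)\,\frac{\dd p\,\dd\mu\,\dd\sigma\,\dd\mu_0\,\dd\zeta_0}{\zeta_0}<\infty.
\]
First I would expand the product of sums into the finite family of $k^n$ allocation terms,
\[
\prod_{i=1}^n\sum_{\ell=1}^k p_\ell\,\mathcal{N}(x_i|\mu_\ell,\sigma_\ell)=\sum_{z}\prod_{\ell=1}^k p_\ell^{\,n_\ell}\prod_{i:z_i=\ell}\mathcal{N}(x_i|\mu_\ell,\sigma_\ell),
\]
where $n_\ell=n_\ell(z)$ counts the observations allocated to component $\ell$. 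Since the sum is finite it suffices to bound each term. The weights are then disposed of uniformly: because $\pi^J(\cdot|\mu,\sigma)$ is a proper density on the simplex (Lemma~\ref{lem:weights}) and $\prod_\ell p_\ell^{\,n_\ell}\le 1$ there, we have $\int \prod_\ell p_\ell^{\,n_\ell}\,\pi^J(p|\mu,\sigma)\,\dd p\le 1$ irrespective of $(\mu,\sigma)$. This eliminates $p$ and leaves a product over components, conditionally on $(\mu_0,\zeta_0)$.

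Next I would integrate out the component parameters. Conditionally on $(\mu_0,\zeta_0)$ the pairs $(\mu_\ell,\sigma_\ell)$ are independent across $\ell$, so the integrand factorises and each \emph{empty} component ($n_\ell=0$) contributes exactly $\int\mathcal{N}(\mu_\ell|\mu_0,\zeta_0)\,\dd\mu_\ell\int\pi(\sigma_\ell|\zeta_0)\,\dd\sigma_\ell=1$, both conditional priors being proper. This is the decisive gain of the hierarchical construction: the allocation terms that would make an ordinary improper prior diverge are here rendered harmless. For the common mean I would integrate $\mu_0$ \emph{before} the $\mu_\ell$, exploiting that all $k$ components share $\mu_0$ through proper Gaussian priors:
\[
\int_{\mathbb{R}}\prod_{\ell=1}^k\mathcal{N}(\mu_\ell|\mu_0,\zeta_0)\,\dd\mu_0=C_k\,\zeta_0^{-(k-1)}\exp\!\Bigl(-\tfrac{1}{2\zeta_0^{2}}\sum_{\ell=1}^k(\mu_\ell-\bar\mu)^2\Bigr),
\]
with $\bar\mu$ the average of the $\mu_\ell$. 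This finite integral absorbs the flatness of the hyperprior in $\mu_0$ and leaves a density penalising the spread of the $\mu_\ell$ at scale $\zeta_0$; integrating the $\mu_\ell$ against it together with the Gaussian likelihood is then a routine Gaussian computation.

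The only genuinely delicate point, and the main obstacle, is the final integral over $\zeta_0$ against the improper factor $1/\zeta_0$, which must converge at \emph{both} endpoints. As $\zeta_0\to\infty$, the normalising constant $\zeta_0^{-(k-1)}$ from the shared-mean integral, together with the $\zeta_0^{-1}$ factor carried by the scale prior $\pi(\sigma_\ell|\zeta_0)$ on each non-empty component, produces polynomial decay that dominates $1/\zeta_0$. As $\zeta_0\to 0$, the scale prior forces every $\sigma_\ell$ to be either smaller than $\zeta_0$ or larger than $1/\zeta_0$, so for any non-empty component its contribution tends to zero like a positive power of $\zeta_0$ (the uniform branch giving super-exponential decay, the reciprocal branch giving $\zeta_0^{\,n_\ell}$), while the tightening mean-coupling cancels the prefactor $\zeta_0^{-(k-1)}$ through the $\zeta_0$-width of the empty-component means. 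I would verify, allocation by allocation, that the resulting integrand is of order $\zeta_0^{\,c}$ with $c>-1$ near the origin and of order $\zeta_0^{-2}$ at infinity, both integrable because every allocation leaves at least one component non-empty. The bookkeeping of these powers of $\zeta_0$ across the empty and non-empty components, and the reciprocal structure of the scale prior that makes the two endpoints behave symmetrically, is exactly where care is needed. Summing the finitely many finite allocation contributions then gives $m(\mathbf{x})<\infty$, hence properness.
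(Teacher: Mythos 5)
Your proposal is correct and follows essentially the same route as the paper's own proof: decompose the likelihood over the finitely many allocation vectors, dispose of the weights via the properness of the conditional Jeffreys prior on the simplex, exploit that the proper conditional priors make empty components integrate to one, and finish by power-counting in $\zeta_0$ at both endpoints against the improper $1/\zeta_0$ hyperprior. If anything, your treatment is more systematic than the paper's (which works out only the two-component case and a single allocation term, handling $\mu_0$ implicitly rather than via the closed-form shared-mean integral you use), and the endpoint orders you claim ($\zeta_0^{c}$ with $c>-1$ at the origin, $\zeta_0^{-2}$ at infinity) are the correct ones.
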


The proof of Theorem \ref{theo:post-altern} is available in Appendix H of the Supplementary Material. 

As a side remark, even if Theorem  \ref{theo:post-altern} is stated for Gaussian mixture models, it may be extended to other location-scale distributions. Section \ref{sec:examples} will present an example with log-normal components, Section \ref{sub:network} with Gumbel components. However it cannot be generalized to any location-scale distribution. 

This hierarchical version of the mixture model presents some advantages; in particular, the Jeffreys prior used for the weights is conservative in terms of number of components in the case of misspecification. We remind that when the number of components is not known, it is usual in practice to fix a model with a high number of components (if one wants to avoid a nonparametric analysis), therefore it is essential that the posterior distribution gives hints on the right $k$. This feature of the Jeffreys prior allow the experimenter to do so in a noninformative way. More precisely, this hierchical prior respect the Assumption 5 of \cite{rousseau:mengersen:2011}.

\section{Simulation Study}
\label{sec:simu}

In this Section we present the results of several simulations studies we conduct to support the theoretical results presented so far. The results of additional simulations are given in Appendix G and H of the Supplementary Material. 

As a remark, integrals of the form \eqref{eq:Fisher-elem} need to be approximated, as mentioned in Section \ref{sec:jeffreys}. There are numerical issue here. We decided to use Riemann sums (with $550$ points) when the component standard deviations are sufficiently large, as they produce stable results, and Monte Carlo integration (with sample sizes of $1500$) when they are small. In the latter case, the variability of MCMC results seems to decrease as $\sigma_i$
approaches $0$. See the Supplementary Material for a detailed description of
these computational issues. 

We can analyse the property of conservativeness in overfitted mixtures through simulations, by using the hierarchical prior proposed in Section \ref{sec:alternative}. We consider a very simple example to illustrate this theoretical result. 
Suppose we want to fit a two-component Gaussian mixture model with weights $p$ and $1-p$ and parameters unknown to a sample of data $\mathbf{x}=\{x_1,\cdots,x_n\}$ generated from a standard normal distribution $\mathcal{N}(0,1)$. We computed the posterior distribution for $M=20$ replications of samples of size $n=(50,100,500,1000,5000,10000)$. Figure \ref{fig:hierch-overfitting} shows that the posterior means of $p$ increases to $1$ as $n$ increases. 

\begin{figure}
\centering
\includegraphics[scale=0.5]{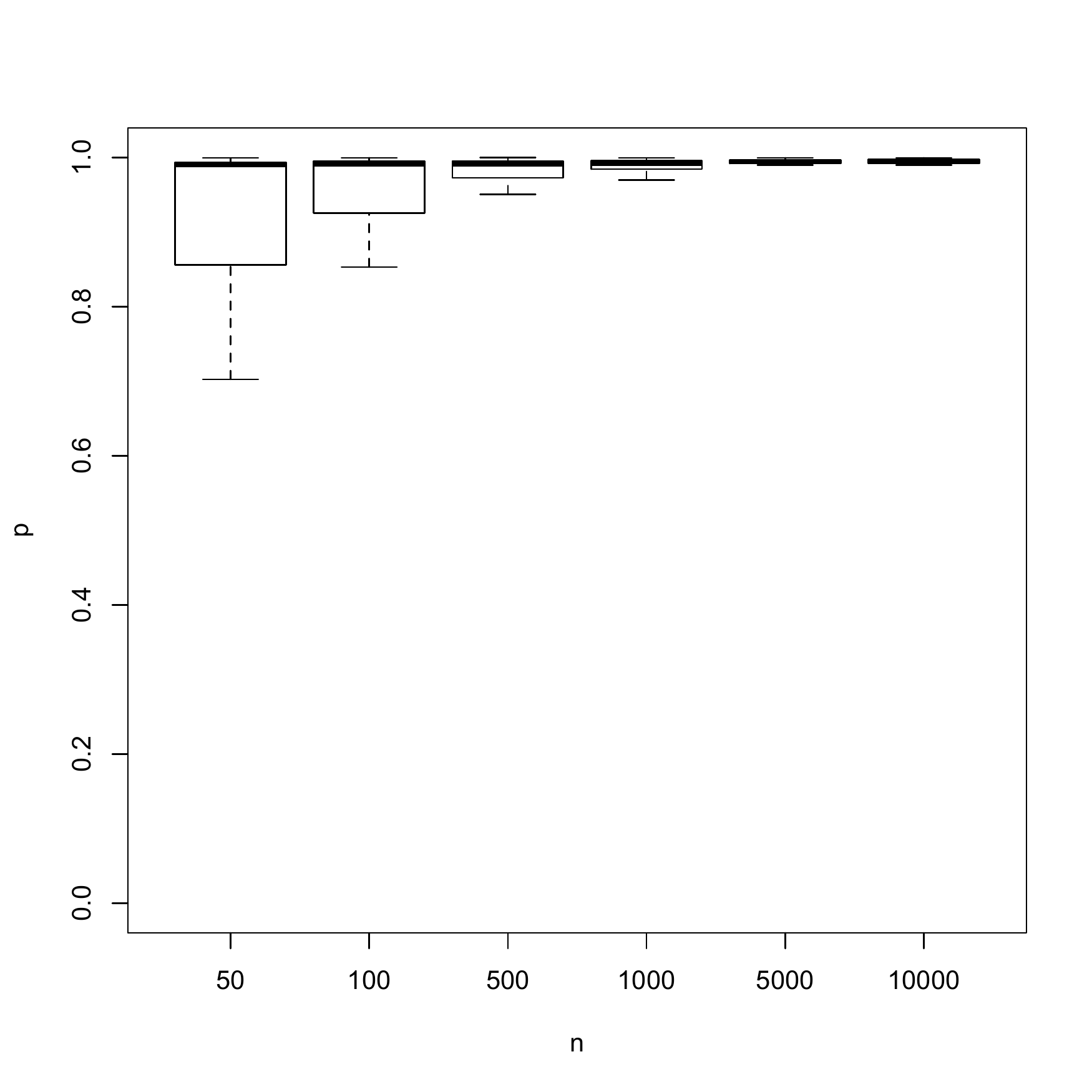}
\caption{Boxplots of posterior means of the largest weight $p$, with the hierarchical prior on the parameters, in particular a conditional Jeffreys prior on the weights, for sample sizes $n=50,100,500,1000,5000,10000$. }
\label{fig:hierch-overfitting}
\end{figure}

We have also considered a more complicated situation, where we want to fit a model with an increasing number of components ($k=(2,3,4,5)$) to a data set $\mathbf{x}=\{x_1,\cdots,x_n\}$ generated from a two-component mixture model

\begin{equation}
\label{eq:overfit}
0.5 \mathcal{N}(-3,1) + 0.5 \mathcal{N}(3,1).
\end{equation}

Figures \ref{fig:hierch-overfitting-incrK} and
\ref{fig:hierch-overfitting-incrK2} show the boxplots for the posterior means
of the weights obtained through $M=20$ replications of the experiment, with a
correct ($k=2$) or a misspecified ($k=(3,4,5)$) model. It is clear that as the
number of components increases, the additional weights are estimated by smaller
and smaller values as the sample size increases. It is evident that the variability of the estimates (in repetitions of the experiment) is smaller when an exact number of components is assumed; however, in every case, the Bayesian analysis based on the Jeffreys prior is able to identify the right number of components. The higher variability in estimating the weights is reflected in the fact that, as the
number of components increases, the estimated (and the predictive) densities are less and less
smooth, nevertheless this feature is mitigated as the sample size increases, see Appendix H in the Supplementary Material.

\begin{figure}
\centering
\includegraphics[scale=0.7]{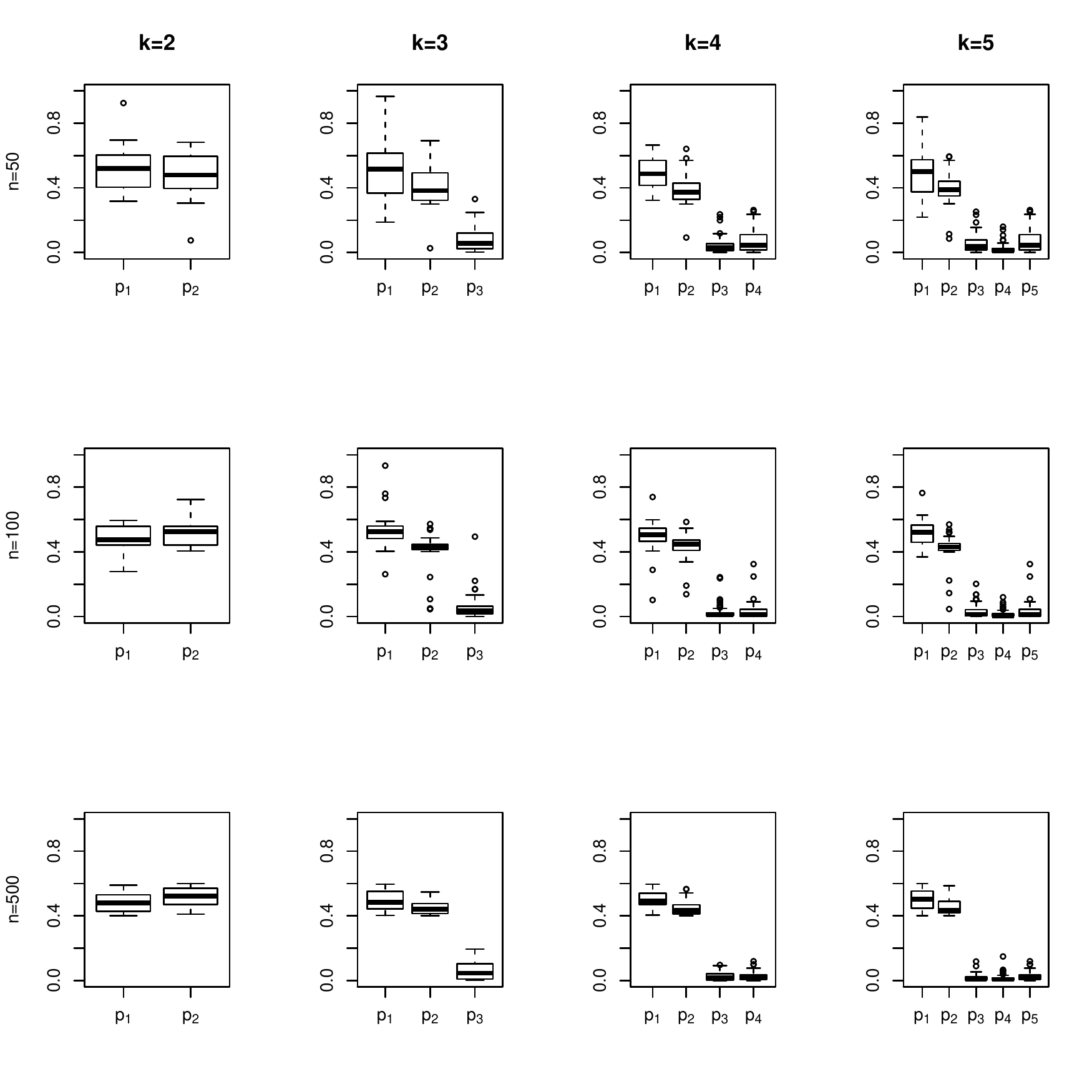}
\caption{Boxplots of posterior means of the weights $\mathbf{p}$, with the
hierarchical prior on the parameters, in particular a conditional Jeffreys
prior on the weights, for sample sizes $n=(50,100,500)$ and with models with
$k=(2,3,4,5)$ components.}
\label{fig:hierch-overfitting-incrK}
\end{figure}

\begin{figure}
\centering
\includegraphics[scale=0.7]{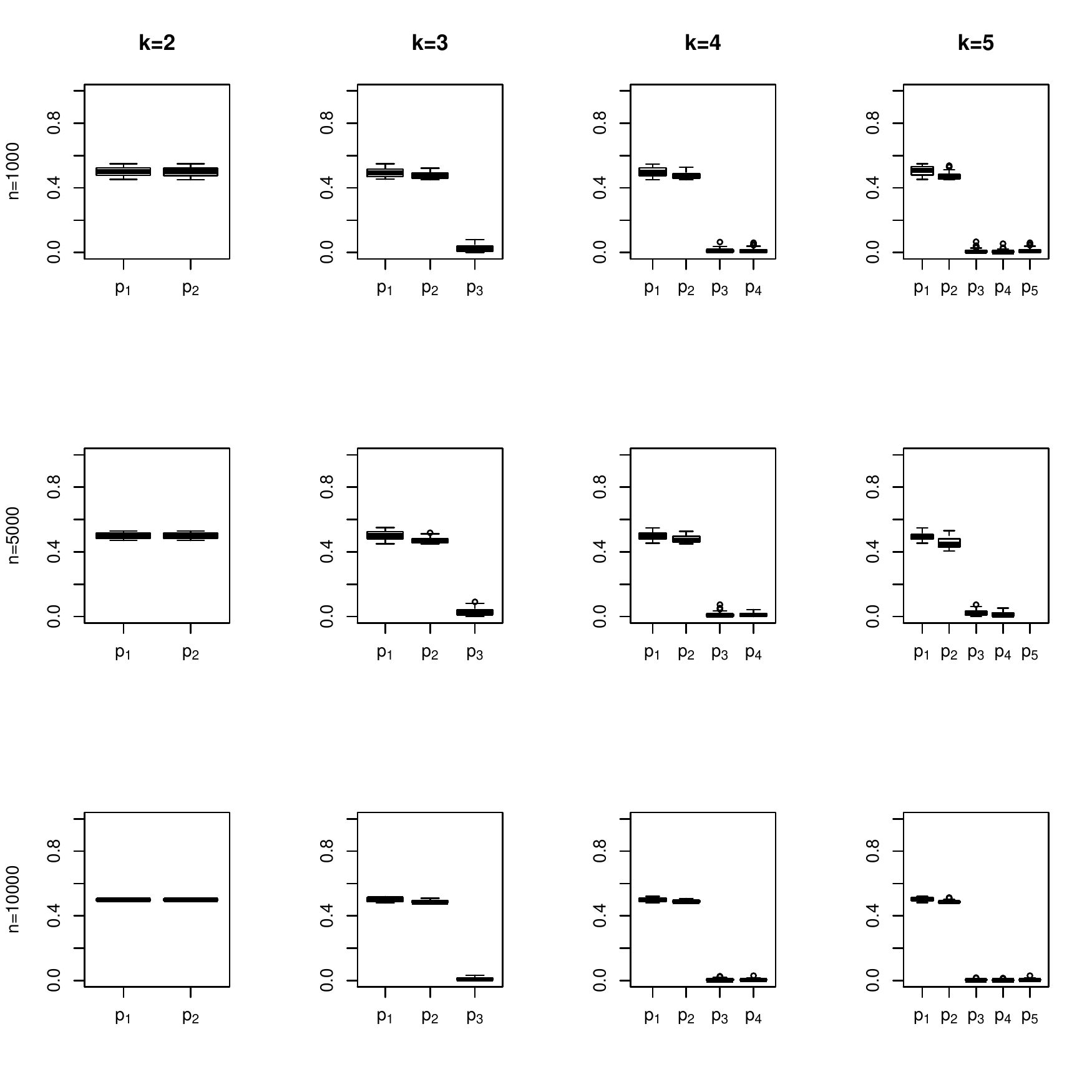}
\caption{As in Figure \ref{fig:hierch-overfitting-incrK}, for sample sizes $n=(1000,5000,10000)$.}
\label{fig:hierch-overfitting-incrK2}
\end{figure}

\section{Illustrations}
\label{sec:examples}

In this Section we will analyse the performance of the approach proposed in
Section \ref{sec:alternative} in three datasets so well-known in the literature
of mixture models that they can be taken as benchmarks and in a new dataset we propose here for the first time. In order to better present this new dataset, the analysis of it is presented separately.  

The first dataset contains data about the velocity (in km per second) of 82
galaxies in the Corona Borealis region. The goal of this analysis is to
understand the number of stellar populations, in order to support a particular
theory of the formation of the Galaxy. The Galaxy dataset has been investigated by
several authors, including \cite{richardson:green:1997}, \cite{raftery:1996},
\cite{escobar:west:1995} and \cite{roeder:1990} among others.

The galaxies velocities are considered as random variables distributed
according to a mixture of $k$ normal distributions. The evaluation of the
number of components has proved to be delicate, with estimates from 3 in
\cite{roeder:wasserman:1997} to 5 in \cite{richardson:green:1997} and 7 in
\cite{escobar:west:1995}. 

We have assumed a ten-component mixture model and check whether or not the hierarchical
approach that uses the conditional Jeffreys prior on the weights of the
mixture model manages to identify a smaller number of significant components. 
The results are available in Figure \ref{fig:galaxy} and Table
\ref{tab:datasets}. The algorithm identifies 5 components with weights larger
than zero, which is a result along the line of \cite{richardson:green:1997} and
more conservative than \cite{escobar:west:1995}, which confirms the Jeffreys
prior's feature of being conservative in the number of the components. Credible
intervals also show that the parameters of the components with marginal
posterior distributions for the weights not concentrated around zero are
estimated with lower uncertainty. 

\begin{figure}
\centering
\includegraphics[scale=0.5]{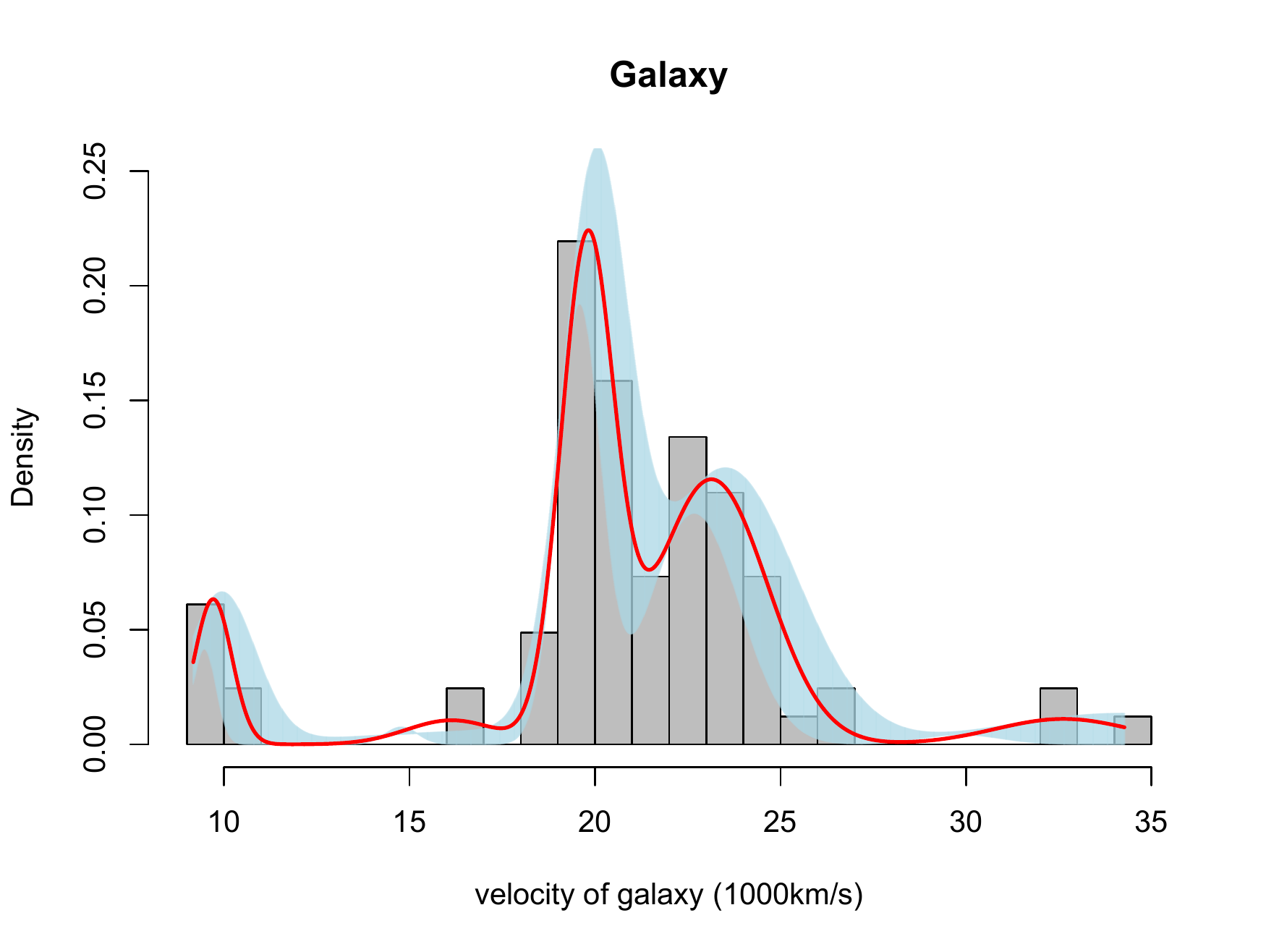}
\caption{Predictive distribution of the galaxy dataset: the red line represent the estimation of the density, the shadow blue area represents the credible intervals in $10^5$ simulations by assuming a ten-component mixture model.}
\label{fig:galaxy}
\end{figure}

The second dataset is related to a population study to validate caffeine as a
probe drug to establish the genetic status of rapid acetylators and slow
acetylators \citep{bechtel:1993}: many drugs, including caffeine, are
metabolyzed by a polymorphic enzyme (EC 2.3.1.5) in humans and the white
population is divided into two groups of slow acetylators and rapid
acetylators. Caffeine is considered an interesting drug to study the phenotype
of people, because it is regularly consumed by a large amount of the
population. Several population studies have been conducted, some of them
reporting a bimodality, some others a trimodality. We focus on the study
presented by \cite{bechtel:1993}, involving 245 unrelated patients and
computing the molar ratio between two metabolites of caffeine, AFMU and 1X,
both measured in urine 4 to 6 hours after ingestion of 200 mg of caffeine. 

We have again assumed a ten-component mixture model and checked whether or not
the hierarchical approach which uses the conditional Jeffreys prior on the
weights of the mixture model is able to identify a smaller number of
significant components. 

The results are available in Figure \ref{fig:enzyme} and Table
\ref{tab:datasets}. The algorithm identifies two components with weights clearly
larger than zero and two other components with very small weights.
\cite{bechtel:1993} identify a bimodal density, while
\cite{richardson:green:1997} consider highly likely a 3-5 component mixture. The
Jeffreys prior allows to concentrate the analysis on mainly two subgroups and
it suggests that Gaussian components may be inappropriate in this setting:
by looking to the location of the components with small weights, it may be more
adequate to consider asymmetric distributions. 

\begin{figure}
\centering
\includegraphics[scale=0.5]{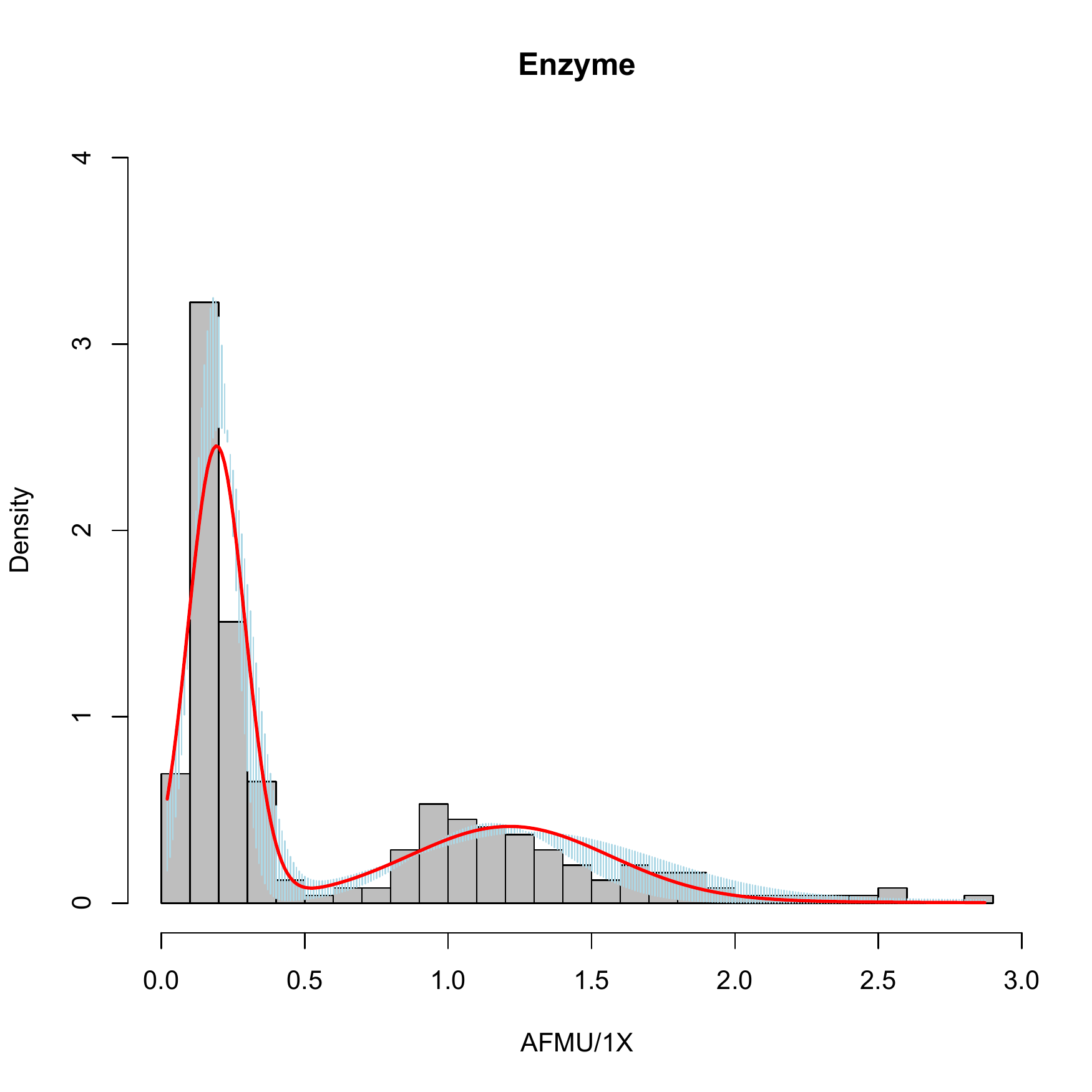}
\caption{Predictive distribution of the enzyme dataset: the red line represent the estimation of the density, the shadow blue area represents the credible intervals in $10^5$ simulations by assuming a ten-component mixture model.}
\label{fig:enzyme}
\end{figure}

Our third dataset is related to measuring the acid neutralizing capacity (ANC)
(in log-scale) of a sample of 155 lakes in north-central Wisconsin, to
determine the number of lakes that have been affected by acidic deposition
\citep{crawford:1992}: the ANC measures the capability of a lake to neutralize
acid, i.e. low values may indicate a problem for the lake's biological
diversity. 

The results are available in Figure \ref{fig:acidity} and Table
\ref{tab:datasets}. The algorithm identifies two components with significant weights and two other components with very small weights.
\cite{crawford:1992} assume a bimodal density, while
\cite{richardson:green:1997} consider highly likely a 3-5 component model. The
Jeffreys prior again allows to concentrate the analysis on two main subgroups
and suggests to investigate the importance of other two components and
possibly the goodness-of-fit of the log-normal distribution in this setting. 

\begin{figure}
\centering
\includegraphics[scale=0.5]{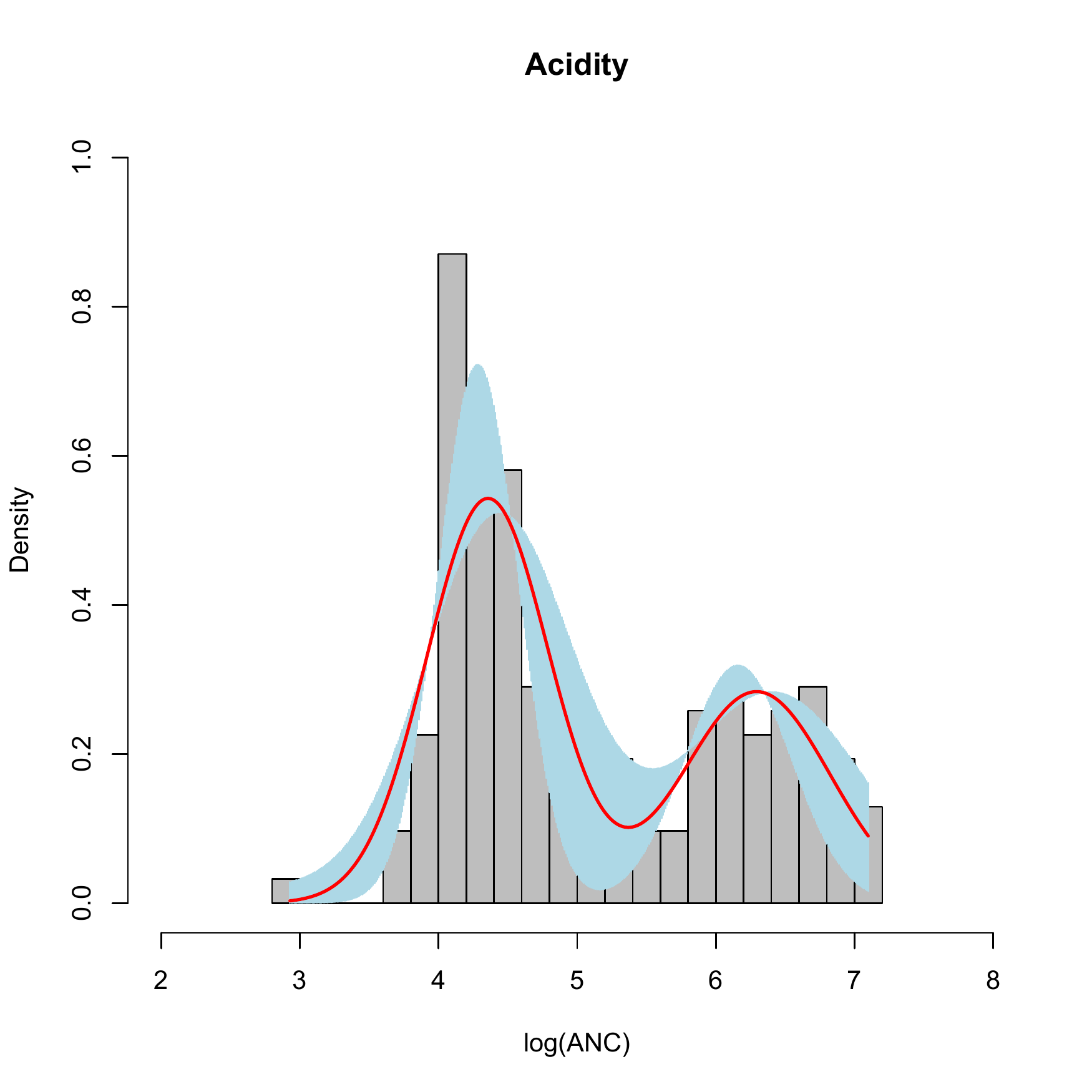}
\caption{Predictive distribution of the acidity dataset: the red line represent
the estimation of the density, the shadow blue area represents the credible
intervals in $10^5$ simulations by assuming a ten-component mixture model.}
\label{fig:acidity}
\end{figure}

\begin{table}
\centering
\caption{Posterior means for the weights, the means and the standard deviations of a ten-component mixture model, assumed for the galaxy, the enzyme and the acidity datasets (the first number in brackets is the posterior mean and the second is the posterior standard deviation). We have decided to not shown the estimated location and scale parameters when the weights are concentrated around zero.}
\label{tab:datasets}      
\begin{tabular}{llll}
\hline\noalign{\smallskip}
\textbf{Dataset:} & \textbf{galaxy} & \textbf{enzyme}  & \textbf{acidity}  \\
\noalign{\smallskip}\hline\noalign{\smallskip}
$p_1$								&	0.437							&	0.606					&	0.601				\\
{}										&	(23.139, 1.507)			&	(0.193, 0.090)		&	(4.356,0.442) 	\\
$p_2$								&	0.390							&	0.343					&	0.378				\\
{}										&	(19.790, 0.715)			& (1.216, 0.348)		&	(6.294, 0.531)	\\
$p_3$								&	0.080							& 0.021					&	0.003				\\
{}										&	( 9.709, 0.503)				& (0.915, 1.174)		& (0.083, 0.802)	\\
$p_4$								&	0.056							&	0.018					&	0.003				\\
{}										&	(32.630, 1.842)			&	(1.176, 0.702)		&	(0.125, 0.589)	\\
$p_5$								&	0.037							&	0.000					&	0.000				\\
{}										&	(16.138,1.226)				& -							& -						\\
$\sum\limits_{\ell=6}^{10} p_\ell$		&	0.000							&	0.000					& 0.000
\end{tabular}
\end{table}

\subsection{Network dataset}
\label{sub:network}

A recent trend in computer network systems is the deployment of network functions in software \cite{nunes2014survey}. The so-called ``software dataplanes'' are emerging as an alternative to traditional hardware switched and routers, reducing costs and enhancing programmability. 

The monitoring of IP packets is, among all possible network functions, one of the most suitable for a software deployment. However, the monitoring has a huge cost in terms of consumed CPU (processing) time by packet. The main reason for this is that each incoming packet triggers the retrieval, from a large hash-table, of all the information related to the packet flow (i.e. the packet's family). This operation is generally called flow-entry retrieval. The time required for the flow-entry retrieval (retrieval time) mainly depends on whether such information is available in one of the processor caches (e.g. L1, L2, L3) or in memory.    

The dataset used in this analysis consists of generated samples of retrieval time, each with $10^6$ times, under two different set-ups. In the first one, the flow-entry has been forced to reside in fast processor caches (``hit''). In the second one, all flow-entries have been forced to reside in the server RAM (memory), which results in a slower flow-entry retrieval (``miss'').

Both samples show a heavy tail, due to possible hash collisions on the table, as well as additional delays introduced by measuring the retrieval time at a nanosecond timescale. In the case of ``miss'', another reason for the heavy tail can be identified with the virtual/physical memory mapping, which can inflate the retrieval time in some cases.

The goal of a realistic analysis is to infer the proportion of reported times which may be considered from the ``hit'' distribution and the proportion of times which may be considered from the ``miss'' distribution, i.e. to derive what is the percentage of packets for which the flow-entry was in the cache and the percentage of packets for which the flow-entry was in memory.

However, a first simulation is generally used to test the procedure. The interest of the analysis will be in the region of the space where the two distributions are overlapping, therefore the interest is not in the external tails, which may, nonetheless, affect inference. Therefore, a preliminary analysis may be conducted in order to understand if a part of the future observations may be discarded from the analysis. 
In this particular case, the conservative property of the Jeffreys prior may be used in order to understand how much important are the tails of each distribution and to identify the right models to use. For instance, a comparison between a Gaussian mixture model and a mixture model with Gumbel components may be run: if in both cases the analysis run with a Jeffreys prior for the mixture weights identifies more than two (assumed) distributions of interest, this may be a suggestion that the observations allocated to the external components (not the ``hit'' or the ``miss'' ones) may be discarded, providing inference on the proportion of observations to discard as well. 

Figure \ref{fig:network} and Table \ref{tab:network} show the results of this analysis: adopting a Jeffreys prior for the mixture weights when assuming Gumbel components allows to better estimate the first component and to describe the asymmetry observed in the data as an asymmetry in the first component instead of an additional component. Nevertheless it is not sufficient to identify the observations in the right tail of the second component as part of its tail, since the algorithm identifies a third component located in that part of the space. 

In this setting, the Jeffreys prior allows to i) identify a miss-specification of the model assumptions (the approximated Bayes factor of the mixture of Gumbel components against the mixture of normal components is $2.10$) and ii) identify which part of the observations to discard from further studies. 

\begin{figure}
\centering
\includegraphics[scale=0.5]{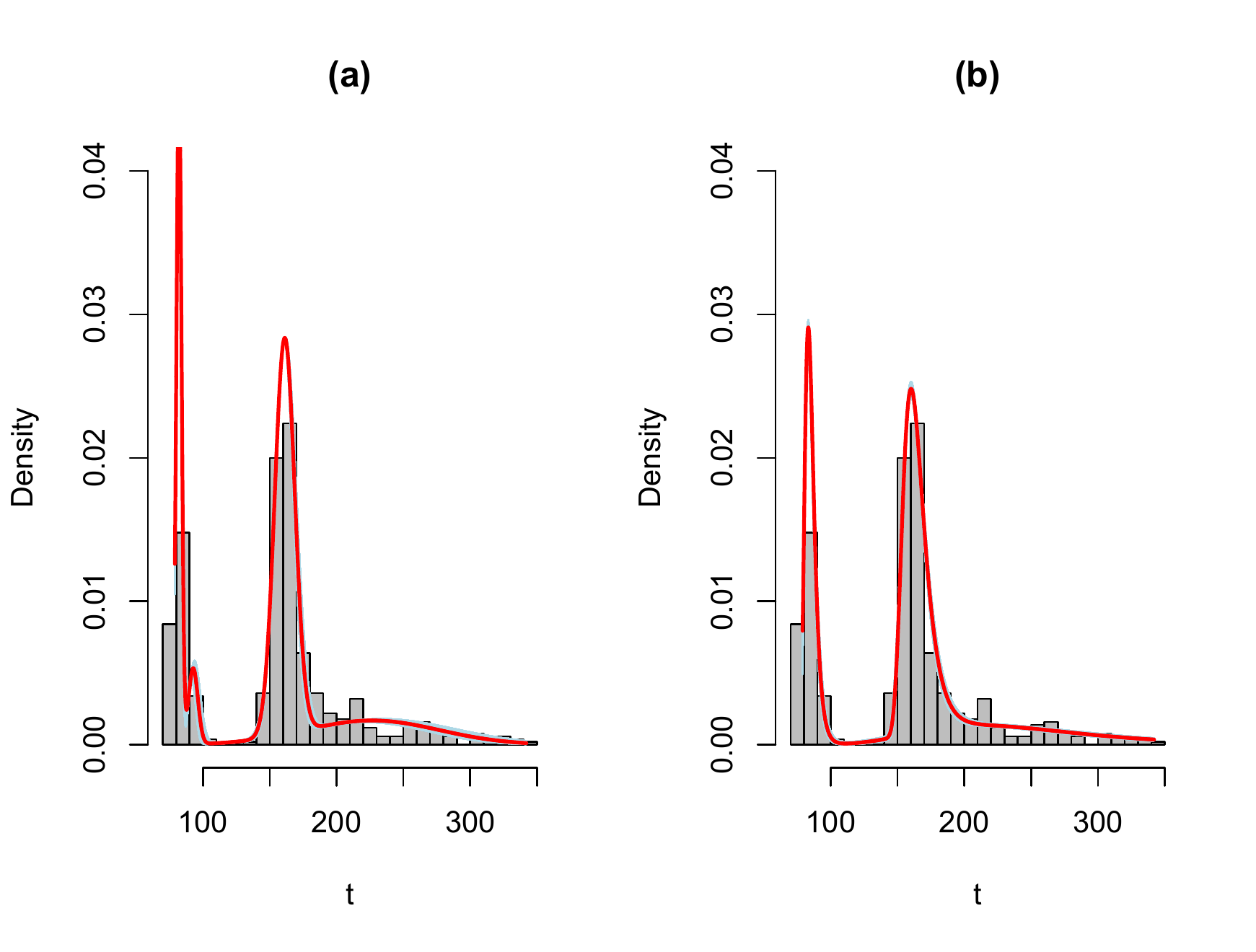}
\caption{Predictive distribution of the network dataset: the red line represent
the estimation of the density, the shadow blue area (very concentrated around the red lines) represents the credible
intervals in $10^5$ simulations by assuming a ten-component mixture model, with Gaussian components on the left and with Gumbel components on the right.}
\label{fig:network}
\end{figure}

\begin{table}
\centering
\caption{Posterior means for the weights, the means and the standard deviations of a ten-component mixture model, assumed for the network dataset (credible intervals of level 0.95 in brackets).}
\label{tab:network}      
\begin{tabular}{lll}
\hline\noalign{\smallskip}
\textbf{Gaussian comp.} & & \\
\noalign{\smallskip}\hline\noalign{\smallskip}
\textbf{$p$} & \textbf{$\mu$}  & \textbf{$\sigma$}  \\
\noalign{\smallskip}\hline\noalign{\smallskip}
0.214                                        & 224.318                                         &  50.271                                      \\
\textit{(0.180,0.249)} & \textit{(222.657,233.842)} & \textit{(45.483,55.265)} \\ 
0.519                                        & 161.645 								& 7.497                                    \\
\textit{(0.474,0.568)} & \textit{(160.216,161.882)}  & \textit{(6.830,8.212)} \\ 
0.221                                        & 82.847                                         & 1.888                                        \\
\textit{(0.188,0.257)} & \textit{(81.057,82.270)}  & \textit{(1.666,2.135)} \\ 
0.046                                        & 92.826                                         & 3.474                                      \\
\textit{(0.029,0.064)} & \textit{(91.710 ,93.700)}   & \textit{(2.698,4.388} \\ 
$\sum\limits_{\ell=5}^{10} p_\ell = 0.000$ & &\\
\hline\noalign{\smallskip}
\textbf{Gumbel comp.} & & \\
\noalign{\smallskip}\hline\noalign{\smallskip}
\textbf{$p$} & \textbf{$\mu$}  & \textbf{$\sigma$}  \\
\noalign{\smallskip}\hline\noalign{\smallskip}
0.214                                        & 213.512                                         &  59.080                                      \\
\textit{(0.183,0.251)} & \textit{(213.446,213.846)} & \textit{(53.526,64.667)} \\ 
0.520                                        & 160.164 								& 7.959                                    \\
\textit{(0.479,0.562)} & \textit{(160.113,160.482)}  & \textit{(7.465,8.482)} \\ 
0.265                                        & 83.260                                         & 3.348                                       \\
\textit{(0.219,0.302)} & \textit{(83.251,83.270)}  & \textit{(3.005,3.753)} \\ 
$\sum\limits_{\ell=4}^{10} p_\ell = 0.000$

\end{tabular}
\end{table}

\section{Conclusion}\label{sec:concl}

This thorough analysis of the Jeffreys priors in the setting of
mixtures with location-scale components shows that mixture distributions deserve the qualification of an ill-posed problem with regard to the production of non-informative priors.
Indeed, we have shown that most configurations for Bayesian inference in this
framework do not allow for the standard Jeffreys prior to be taken as a
reference.  While this is not the first occurrence where Jeffreys priors cannot
be used as reference priors, we have shown that the Jeffreys prior for the mixture weights has the important property to be conservative in the number of components, with a configuration compatible with the results of \cite{rousseau:mengersen:2011}.This is a general feature of the Jeffreys prior for the mixture weights, which is independent from the shape of the distributions composing the mixture. 

Nevertheless, we have decided to study its behavior in the specific case of components from location-scale families. We have proposed a hierarchical representation of the mixture model, which allow for improper priors at the highest level of the hierarchy and assumes the Jeffreys prior for the mixture weights in the second level, conditional on prior distributions for the location and scale parameters along the line of \cite{mengersen:robert:1996}. 

Through several examples, both on simulated and real datasets, we have shown that this representation seems to be more conservative on the number of components than other non or weakly informative prior distributions for mixture models available in the literature. In particular, it seems to be able to recognize the meaningful components, which is an essential property for a noninformative prior for mixture model: in fact, in an objective setting, it is essential to consider the possibility to have assumed a wrong number of components. 
In this sense, the Jeffreys prior for the mixture weights may be used to identify the meaningful components and possible miss-specifications of either the number or the distributional family of the components. 

As a note aside, we have mainly analyzed mixture of Gaussian distributions in this paper, with extensions of the theoretical results to the other distributions of the location-scale family. Nevertheless, the possible difficulties deriving from the use of distributions different from the Gaussian are not considered here and will be the focus of future research. In particular, all likelihoods poorly specified and ill-behaved cases are more likely to meet difficulties. However, the Jeffreys prior is known as a regularization prior that does not necessarily reflect prior beliefs, but in combination with the likelihood function yields posteriors with desirable properties; see \cite{hoogerheide:vandijk:08} for a detailed review of ill-behaved posterior cases and the role of the Jeffreys prior in those cases.

\section*{Acknowledgements and Notes}

The code used for the Gaussian mixture models is available online at the following link: \url{https://github.com/cgrazian/Jeffreys_mixtures}.

The Authors want to thank Gioacchino Tangari, from the Department of Electronic and Electrical Engineering, University College London, for having provided the simulations of Section \ref{sub:network}.

\bibliographystyle{ims}  
\hyphenation{Post-Script Sprin-ger}

\section*{Supplementary Material}

\subsection*{Appendix A: Form of the Jeffreys prior for the weights of the mixture model.}

The shape of the Jeffreys prior for the weights of a mixture model depends on
the type of the components.  Figure \ref{weights-GMM}, \ref{weights-GtMM} and
\ref{weights-GtMM-df} show the form of the Jeffreys prior for a two-component
mixture model for different choices of components. It is always concentrated
around the extreme values of the support, however the amount of concentration
around $0$ or $1$ depends on the information brought by each component. In
particular, Figure \ref{weights-GMM} shows that the prior is much more
symmetric as there is symmetry between the variances of the distribution
components, while Figure \ref{weights-GtMM} shows that the prior is much more
concentrated around 1 for the weight relative to the normal component if the
second component is a Student \textit{t} distribution. 

Finally Figure \ref{weights-GtMM-df} shows the behavior of the Jeffreys prior when the first component is Gaussian and the second is a Student \textit{t} and the number of degrees of freedom is increasing. As expected, as the Student \textit{t} is approaching a normal distribution, the Jeffreys prior becomes more and more symmetric.
\begin{figure}
\centering
\includegraphics[width=10cm, height=7.5cm]{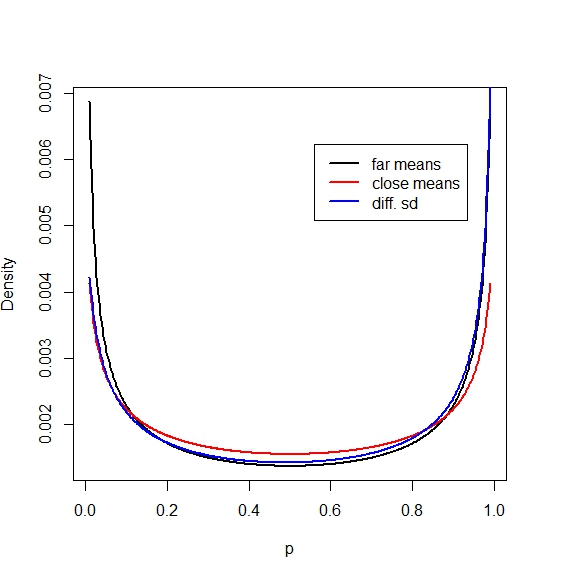}
\caption{Approximations of the marginal prior distributions for the first weight of a two-component Gaussian mixture model, $p\,\mathcal{N}(-10,1)+(1-p)\,\mathcal{N}(10,1)$ (black), $p\,\mathcal{N}(-1,1)+(1-p)\,\mathcal{N}(1,1)$ (red) and $p\,\mathcal{N}(-10,1)+(1-p)\,\mathcal{N}(10,10)$ (blue).}
\label{weights-GMM}
\end{figure}
\begin{figure}
\centering
\includegraphics[width=10cm, height=7.5cm]{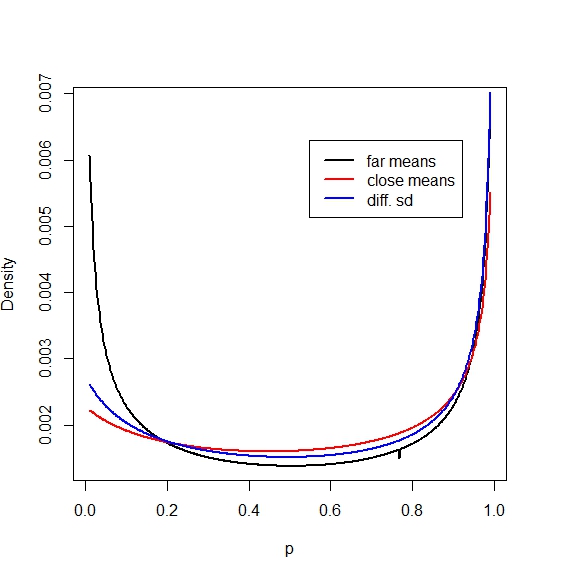}
\caption{Approximations of the marginal prior distributions for the first weight of a two-component mixture model where the first component is Gaussian and the second is Student t, $p\,\mathcal{N}(-10,1)+(1-p)\,\mathrm{t}(df=1,10,1)$ (black), $p\,\mathcal{N}(-1,1)+(1-p)\,\mathrm{t}(df=1,1,1)$ (red) and $p\,\mathcal{N}(-10,1)+(1-p)\,\mathrm{t}(df=1,10,10)$ (blue).}
\label{weights-GtMM}
\end{figure}
\begin{figure}
\centering
\includegraphics[width=10cm, height=7.5cm]{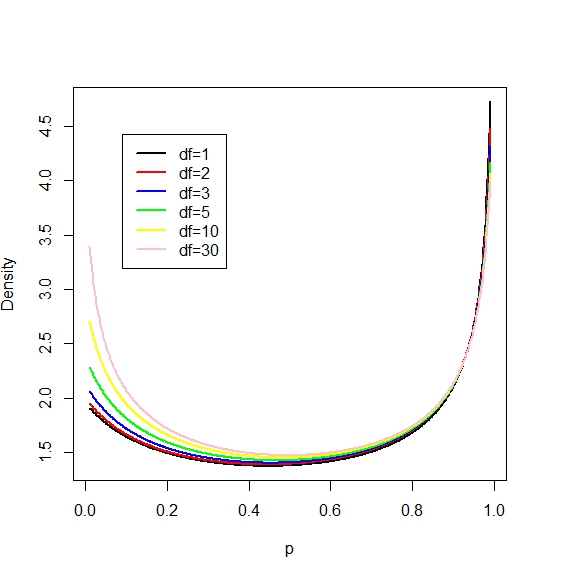}
\caption{Approximations of the marginal prior distributions for the first weight of a two-component mixture model where the first component is Gaussian and the second is Student t with an increasing number of degrees of freedom.}
\label{weights-GtMM-df}
\end{figure}

\subsection*{Appendix B: Proof of Lemma 3.2}

\textit{
When the parameters of a location-scale mixture model are unknown, the Jeffreys prior is improper, constant in $\mu$ and powered as $\tau^{-d/2}$, where $d$ is the total number of unknown parameters of the components (i.e. excluding the weights). 
}

\begin{proof} 
We first consider the case where the means are the only unknown parameters of a Gaussian mixture model
\begin{equation*}
g_X(x)=\sum_{l=1}^k p_l \mathcal{N}(x|\mu_l,\sigma_l^2)
\end{equation*}
The generic elements of the expected Fisher information matrix are, in the case of diagonal and off-diagonal terms respectively:
\footnotesize{
\begin{align*}
& \mathbb{E}\left[- \frac{\partial^2 \log g_X(X)}{\partial \mu_i^2}\right]=\frac{p_i^2}{\sigma_i^4} \bigintsss_{-\infty}^\infty
\frac{\left[ (x-\mu_i) \mathcal{N}(x|\mu_i,\sigma_i^2)\right]^2}{\sum_{\ell=1}^k p_\ell \mathcal{N}(x|\mu_\ell,\sigma_\ell^2)}
d x, \\
& \mathbb{E}\left[- \frac{\partial^2 \log g_X(X)}{\partial \mu_i \partial \mu_j}\right]=\frac{p_i p_j}{\sigma_i^2
\sigma_j^2} \cdot \\
& \quad \cdot \bigintsss_{-\infty}^\infty \frac{(x-\mu_i) \mathcal{N}(x|\mu_i,\sigma_i^2)(x-\mu_j)
\mathcal{N}(x|\mu_j,\sigma_j^2) }{\sum_{\ell=1}^k p_\ell\mathcal{N}(x|\mu_\ell,\sigma_\ell^2)}  d x.
\end{align*}
}
\normalsize{
Now, consider the change of variable $t=x-\mu_i$ in the above integrals, where $\mu_i$ is thus the mean of the $i$-th Gaussian component
($i\in\{1,\cdots,k\}$). The above integrals are then equal to
}
\footnotesize{
\begin{align*}
& \mathbb{E}\left[- \frac{\partial^2 \log g_X(X)}{\partial \mu_j^2}\right] = \frac{p_j^2}{\sigma_j^4} \cdot \\ & \quad \cdot \bigintsss_{-\infty}^\infty
\frac{\left[ (t-\mu_j+\mu_i) \mathcal{N}(t|\mu_j-\mu_i,\sigma_i^2)\right]^2}{\sum_{\ell=1}^k p_\ell
\mathcal{N}(t|\mu_\ell-\mu_i,\sigma_\ell^2)} d x,\\
& \mathbb{E}\left[- \frac{\partial^2 \log g_X(X)}{\partial \mu_j \partial \mu_m}\right] = \frac{p_j p_m}{\sigma_j^2
\sigma_m^2} \cdot \\
& \quad \cdot \bigintsss_{-\infty}^\infty \frac{(t-\mu_j+\mu_i) \mathcal{N}(x|\mu_j,\sigma_j^2)(t-\mu_m+\mu_i)
\mathcal{N}(t|\mu_m-\mu_i,\sigma_m^2) }{\sum_{\ell=1}^k p_\ell\mathcal{N}(t|\mu_\ell-\mu_i,\sigma_\ell^2)}  d x.\\
\label{eq:means-prior}
\end{align*}
}
\normalsize{
Therefore, the terms in the Fisher information only depend on the differences $\delta_j=\mu_i-\mu_j$ for $j \in
\{1,\cdots,k \}$. This implies that the Jeffreys prior is improper since a reparametrization in
($\mu_i,\mathbf{\delta}$) shows the prior does not depend on $\mu_i$.

Moreover, consider a two-component mixture model with all the parameters unknown
\begin{equation*}
p\mathcal{N}(\mu,\tau^2)+(1-p)\mathcal{N}(\mu+\tau\delta,\tau^2\sigma^2).
\end{equation*}
With some computations, it is straightforward to derive the Fisher information matrix for this model, partly shown in Table \ref{tab:FishInfo_repar}, where each element is multiplied for a term which does not depend on $\tau$.
}

\begin{table}[h]
\centering
\caption{Factors depending on $\tau$ of the Fisher information matrix for the reparametrized model (8)}.
\label{tab:FishInfo_repar}
\begin{tabular}{c|ccccc}
                  & \textbf{$\sigma$} & \textbf{$\delta$} & p                     & \textbf{$\mu$} & \textbf{$\tau$} \\ \hline
\textbf{$\sigma$} & 1                 & 1                 & \multicolumn{1}{c|}{1}           & $\tau^{-1}$      & $\tau^{-1}$     \\
\textbf{$\delta$} & 1                 & 1                 & \multicolumn{1}{c|}{1}           & $\tau^{-1}$      & $\tau^{-1}$     \\
p        & 1                 & 1                 & \multicolumn{1}{c|}{1}           & $\tau^{-1}$      & $\tau^{-1}$     \\ \cline{2-6} 
\textbf{$\mu$}    & $\tau^{-1}$       & $\tau^{-1}$       & \multicolumn{1}{c|}{$\tau^{-1}$} & $\tau^{-2}$      & $\tau^{-2}$     \\
\textbf{$\tau$}   & $\tau^{-1}$       & $\tau^{-1}$       & \multicolumn{1}{c|}{$\tau^{-1}$} & $\tau^{-2}$      & $\tau^{-2}$    
\end{tabular}
\end{table}

Therefore, the Fisher information matrix considered as a function of $\tau$ is a block matrix. From well-known results
in linear algebra, if we consider a block matrix 
\begin{equation*}
M=
\begin{bmatrix}
A & B \\
C & D
\end{bmatrix}
\end{equation*}
\noindent then its determinant is given by $\det(M)=\det(A-BD^{-1}C)\det(D)$. In the case of a two-component mixture
model where the total number of components parameters (i.e. non considering the weights) is $d=4$, $\det(D)\propto\tau^{-4}$, while $\det(A-BD^{-1}C)\propto 1$ (always interpreted as functions of $\tau$ only). Then the
Jeffreys prior for a two-component Gaussian mixture model is proportional to $\tau^{-2}$. If we generalize to the case of a Gaussian mixture model with $k$ components, the total number of component parameters is $d = 2k$ and the Jeffreys prior for a $k$-component Gaussian mixture model is proportional to $\tau^{-k}$.

When considering the general case of components from a location-scale family, this feature of improperness of the Jeffreys prior distribution is still valid, because, once reference location-scale parameters are chosen, the mixture model may be rewritten as
\begin{equation}
\label{eq:mix-locscale}
p_1 f_1(x|\mu,\tau)+\sum_{\ell=2}^k p_\ell f_\ell(\frac{a_\ell+ x}{b_\ell} |\mu,\tau,a_\ell,b_\ell).
\end{equation}
Then the second derivatives of the logarithm of model \eqref{eq:mix-locscale} behave as the ones we have derived for the Gaussian case, i.e. they will depend on the differences between each location parameter and the reference one, but not on the reference location itself. Then the Jeffreys prior will be constant with respect to the global location parameter and powered in the global scale parameter. 

\end{proof}

\subsection*{Appendix C: Jeffreys prior for $\delta=\mu_2 - \mu_1$}

\textit{The Jeffreys prior of $\delta$ conditional on $\mu$ when only the location parameters are unknown is improper.
}

\begin{proof}

When considering the reparametrization by \cite{mengersen:robert:1996}, the Jeffreys prior for $\delta$ for a fixed $\mu$ has the form:
\begin{equation*}
\pi^J(\delta|\mu)\propto \left[
\int_\mathfrak{X}\frac{\left[{(1-p)x\exp\{-\frac{x^2}{2}\}}\right]^2}{{p\sigma\exp\{-\frac{\sigma^2(x+\frac{\delta}{\sigma\tau})^2}{2}\}}+{(1-p)\exp\{-\frac{x^2}{2}\}}}
d x \right]^{\frac{1}{2}}
\end{equation*}
\noindent and the following result may be demonstrated.
The improperness of the conditional Jeffreys prior on $\delta$ depends (up to a constant) on the double integral
\begin{eqnarray*}
\int_\Delta
\int_\mathfrak{X} c \frac{\left[(1-p)x\exp\{-\frac{x^2}{2}\}\right]^2}{p\sigma\exp\{-\frac{\sigma^2(x+\frac{\delta}{\sigma\tau})^2}{2}\}+(1-p)\exp\{-\frac{x^2}{2}\}}
d x d\delta.
\end{eqnarray*}
\noindent The order of the integrals is allowed to be changed, then
\begin{eqnarray*}
\int_\mathfrak{X} x^2 \int_\Delta
\frac{\left[(1-p)\exp\{-\frac{x^2}{2}\}\right]^2}{p\sigma\exp\{-\frac{\sigma^2(x+\frac{\delta}{\sigma\tau})^2}{2}\}+(1-p)\exp\{-\frac{x^2}{2}\}}
d\delta d x .
\end{eqnarray*}
\noindent Define $f(x)=(1-p)e^{-\frac{x^2}{2}}=\frac{1}{d}$. Then
\begin{eqnarray*}
\int_\mathcal{X} x^2 \int_\Delta \frac{1}{d^2
p\sigma\exp\{-\frac{\sigma^2(x+\frac{\delta}{\sigma\tau})^2}{2}\}+d} d\delta d x .
\end{eqnarray*}
\noindent Since the behavior of $\left[d^2
p\sigma\exp\{-\frac{\sigma^2(x+\frac{\delta}{\sigma\tau})^2}{2}\}+d\right]$ depends on $\exp\{-\delta^2\}$
as $\delta$ goes to $\infty$, we have that 
\begin{equation*}
\int_{-\infty} ^{+\infty} \frac{1}{\exp\{-\delta^2\}+d} d\delta > \int_{A} ^{+\infty} \frac{1}{\exp\{-\delta^2\}+d} d\delta  
\end{equation*}
\noindent because the integrand function is positive. Then
\begin{equation*}
\int_{A} ^{+\infty} \frac{1}{\exp\{-\delta^2\}+d} d\delta > \int_{A} ^{+\infty} \frac{1}{\varepsilon+d} d\delta = +\infty.
\end{equation*}

Therefore the conditional Jeffreys prior on $\delta$ is improper.

\end{proof}

Figure \ref{fig:priorpost-diff} compares the
behaviour of the prior and the resulting posterior distribution for the difference between the means of a two-component
Gaussian mixture model: the prior distribution is symmetric and it has different behaviours depending on the value of the
other parameters, but it always stabilizes for large enough values; the posterior distribution appears to always concentrate
around the true value. 

\begin{figure}
\centering
\includegraphics[width=6.5cm, height=7.5cm]{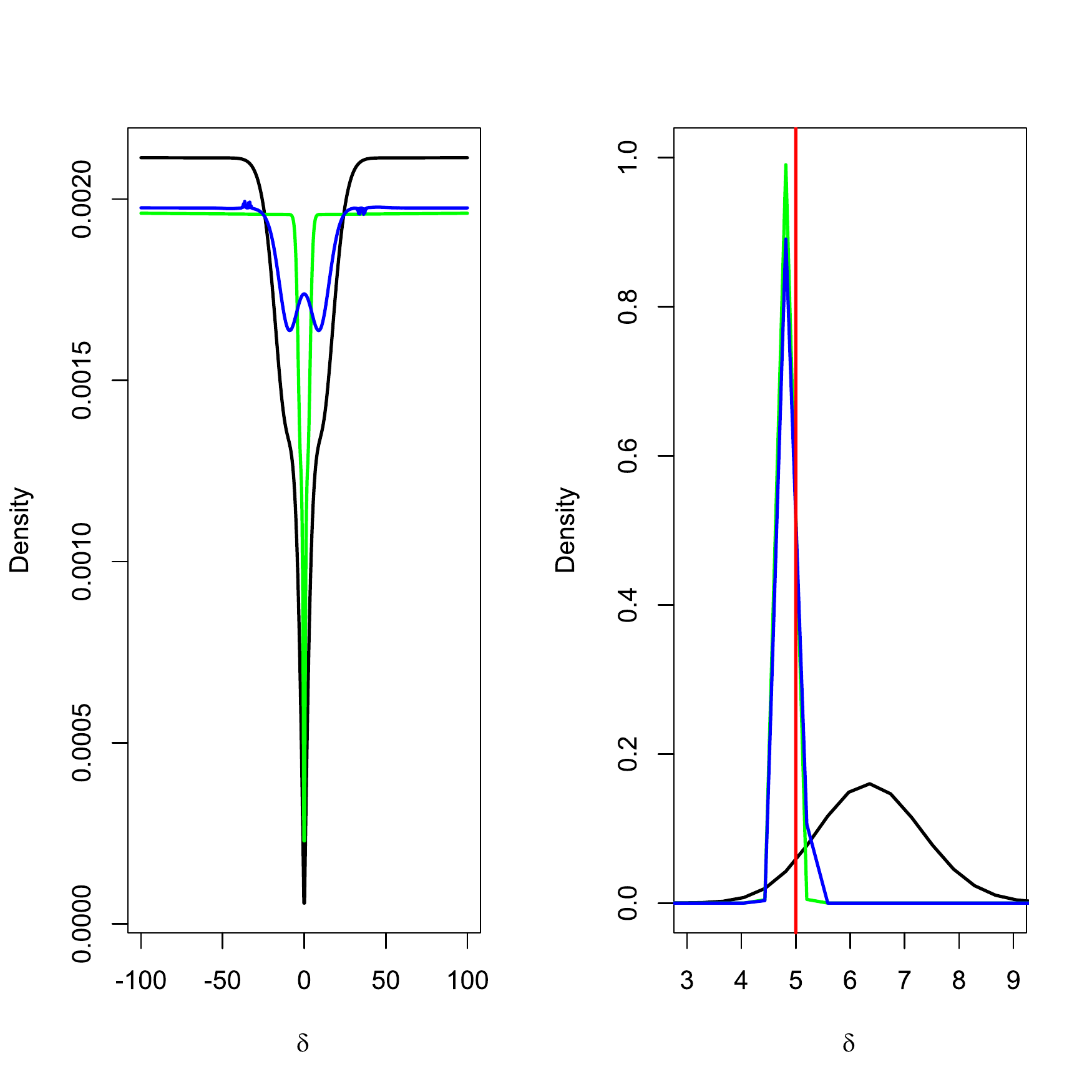}
\caption{Approximations (on a grid of values) of the Jeffreys prior (on the natural scale) of the difference between the means of a Gaussian mixture model with only the means unknown (left) and of the derived posterior distribution (on the right, the red line represents the true value), with known weights equal to $(0.5,0.5)$ (black lines), $(0.25,0.75)$ (green and blue lines) and known standard deviations equal to $(5,5)$ (black lines), $(1,1)$ (green lines) and $(7,1)$ (blue lines).} 
\label{fig:priorpost-diff}
\end{figure} 

\subsection*{Appendix D: Proof of lemma 3.3}

\textit{When $k=2$, the posterior distribution derived from the Jeffreys prior when only the location parameters of model 
\begin{equation}
\label{eq:mix-locscale}
p_1 f_1(x|\mu,\tau)+\sum_{\ell=2}^k p_\ell f_\ell(\frac{a_\ell+ x}{b_\ell} |\mu,\tau,a_\ell,b_\ell).
\end{equation}
are unknown is proper.}

\begin{proof}
The conditional Jeffreys prior for the means of a Gaussian mixture model follow the behavior of the product of the diagonal elements of the Fisher information matrix:
\begin{align*}
  \frac{p_1 p_2}{\sigma_1^2 \sigma_2^2} & \left\{ \int_{-\infty}^{+\infty}\frac{\left[
t\mathcal{N}(0,\sigma_1)\right]^2}{p_1\mathcal{N}(0,\sigma_1)+p_2\mathcal{N}(\delta,\sigma_2)} d t  \times  \int_{-\infty}^{+\infty} \frac{\left[
u\mathcal{N}(0,\sigma_2)\right]^2}{p_1\mathcal{N}(-\delta,\sigma_1)+p_2\mathcal{N}(0,\sigma_2)} d u \right. \nonumber \\ 
					&{} \left. -\left(\int_{-\infty}^{+\infty} \frac{ t\mathcal{N}(0,\sigma_1)
(t-\delta)\mathcal{N}(\delta,\sigma_2)}{p_1\mathcal{N}(0,\sigma_1)+p_2\mathcal{N}(\delta,\sigma_2)} d t\right)^2 \right\}^\frac{1}{2}
\end{align*}
\noindent where $\delta=\mu_2-\mu_1$. 

The posterior distribution is then defined as
\begin{equation*}
\prod_{i=1}^n \left[p_1\mathcal{N}(x_i | \mu_1,\sigma_1)+p_2\mathcal{N}(x_i |\mu_2,\sigma_2)\right]\pi^J(\mu_1,\mu_2|p,\sigma).
\end{equation*}

The likelihood may be rewritten (without loss of generality, by considering $\sigma_1=\sigma_2=1$, since they are known) as
\begin{align}
L(\theta)&=\prod_{i=1}^n \left[p_1\mathcal{N}(x_i |\mu_1,1)+p_2\mathcal{N}(x_i |\mu_2,1)\right] \nonumber \\
		&= \frac{1}{(2\pi)^\frac{n}{2}}\left[p_1^n e^{-\frac{1}{2}\sum\limits_{i=1}^n (x_i-\mu_1)^2}+\sum_{j=1}^n p_1^{n-1}p_2e^{-\frac{1}{2}\sum\limits_{i\neq j} (x_i-\mu_1)^2-\frac{1}{2}(x_j-\mu_2)^2}\right. \nonumber \\
		&{} \left. +\sum_{j=1}^n \sum_{k\neq j} p_1^{n-2}p_2^2 e^{-\frac{1}{2}\left[ \sum\limits_{i\neq j,k} (x_i-\mu_1)^2+(x_j-\mu_2)^2+(x_k-\mu_2)^2 \right] } \right. \nonumber \\
		&{} \left. +\cdots+p_2^n e^{-\frac{1}{2}\sum\limits_{j=1}^n (x_j-\mu_2)^2}\right].
\label{eq:mixlik}
\end{align}
Then, for $|\mu_1|\rightarrow\infty$, $L(\theta)$ tends to the term 
$$p_2^n e^{-\frac{1}{2}\sum\limits_{j=1}^n (x_j-\mu_2)^2}$$
that is constant for $\mu_1$. Therefore we can study the behavior of the posterior distribution for this part of the
likelihood to assess its properness. 

This explains why we want the following integral to converge:
\begin{equation*}
\int_{\mathbb{R}\times\mathbb{R}} p_2^n e^{-\frac{1}{2}\sum\limits_{j=1}^n (x_j-\mu_2)^2} \pi^J(\mu_1,\mu_2) d\mu_1 d\mu_2
\end{equation*}
which is equal to (by the change of variable $\mu_2-\mu_1=\delta$)
\begin{equation*}
\int_{\mathbb{R}\times\mathbb{R}} p_2^n e^{-\frac{1}{2}\sum\limits_{j=1}^n (x_j-\mu_1-\delta)^2} \pi^J(\mu_1,\delta) d\mu_1 d\delta.
\end{equation*}
In Appendix C of the Supplementary Material it is possible to see that the prior distribution only depends on the difference between the means $\delta$:
\begin{align}
&\int_\mathbb{R} p_2^n \int_\mathbb{R} e^{-\frac{1}{2}\sum\limits_{j=1}^n (x_j-\mu_1-\delta)^2}d\mu_1 \pi^J(\delta)d\delta \nonumber \\
&\propto \int_\mathbb{R} \int_\mathbb{R} e^{-\frac{1}{2}\sum\limits_{j=1}^n (x_j-\delta)^2
+\mu_1\sum\limits_{j=1}^n(x_j-\delta)-\frac{1}{2}n\mu_1^2} d\mu_1 \pi^J(\delta)d\delta \nonumber \\
&=\int_\mathbb{R} \left[\int_\mathbb{R} e^{\mu_1\sum\limits_{j=1}^n(x_j-\delta)-\frac{1}{2}n\mu_1^2} d\mu_1\right]
e^{-\frac{1}{2}\sum\limits_{j=1}^n (x_j-\delta)^2} \pi^J(\delta)d\delta \nonumber \\
&=\int_\mathbb{R} e^{-\frac{1}{2}\sum\limits_{j=1}^n (x_j-\delta)^2+\sum\limits_{j=1}^n\frac{(x_j-\delta)}{2n}} \pi^J(\delta)d\delta \approx \int_\mathbb{R} e^{-\frac{n-1}{2}\delta^2} \pi^J(\delta)d\delta. \label{eq:postmean}
\end{align}
where $\pi^J(\delta)$ is defined as 
\begin{equation*}
\pi^J(\delta)\propto \left[
\int_\mathfrak{X}\frac{\left[{(1-p)x\exp\{-\frac{x^2}{2}\}}\right]^2}{{p\sigma\exp\{-\frac{\sigma^2(x+\frac{\delta}{\sigma\tau})^2}{2}\}}+{(1-p)\exp\{-\frac{x^2}{2}\}}}
d x \right]^{\frac{1}{2}}
\end{equation*}
As $\delta \rightarrow \pm \infty$ this quantity is constant with respect to $\delta$. Therefore the integral \eqref{eq:postmean} is convergent for $n \geq 2$.
   
\end{proof} 

\subsection*{Appendix E: Proof of Lemma 3.4}

\textit{
When $k>2$, the posterior distribution derived from the Jeffreys prior when only the location parameters of model \eqref{eq:mix-locscale} are unknown is improper.
}

\begin{proof}
In the case of $k\neq 2$ components, the Jeffreys prior for the location parameters is still constant with respect to a
reference mean (for example, $\mu_1$).  Therefore it depends on the difference parameters
$(\delta_2=\mu_2-\mu_1,\delta_3=\mu_3-\mu_1,\cdots,\delta_k=\mu_k-\mu_1)$.

The Jeffreys prior depends on the product on the diagonal

\begin{align*}
& \bigintsss_{-\infty}^\infty \frac{[t\mathcal{N}(0,\sigma_1^2)]^2}{p_1\mathcal{N}(0,\sigma_1^2)+\cdots+p_k \mathcal{N}(\delta_k,\sigma_k^2)}d t  \nonumber \\
							& \cdots \bigintsss_{-\infty}^\infty \frac{[u \mathcal{N}(0,\sigma_k^2)]^2}{p_1\mathcal{N}(-\delta_k,\sigma_1^2)+\cdots+p_k \mathcal{N}(0,\sigma_k^2)} d u .
\end{align*}

If we consider the case as in Lemma 3.3, where only the part of the likelihood depending on e.g. $\mu_2$
may be considered, the convergence of the following integral has to be studied:

\begin{equation*}
\int_\mathbb{R} \cdots \int_\mathbb{R} e^{-\frac{n-1}{2}\delta_2^2} \pi^J(\delta_2,\cdots,\delta_k) d \delta_2 \cdots d \delta_k.
\end{equation*}

In this case, however, the integral with respect to $\delta_2$ may converge, nevertheless the integrals with respect to $\delta_j$ with $j\neq 2$ will diverge, since the prior tends to be constant for each $\delta_j$ as $|\delta_j| \rightarrow \infty$.
\end{proof}

\subsection*{Appendix F: Proof of Theorem 3.1}

\textit{The posterior distribution of the parameters of a mixture model with location-scale components derived from the Jeffreys prior when all parameters of model \eqref{eq:mix-locscale} are unknown is improper.}

\begin{proof}

Consider a mixture model with components coming from the location-scale family. The proof will consider Gaussian components, however it may be generalized to any location-scale distribution. 

Consider the elements on the diagonal of the Fisher information matrix; again, since the Fisher information matrix is positive definite, the determinant is bounded by the product of the terms in the diagonal. 

Consider a reparametrization into $\tau=\sigma_1$ and $\tau\sigma=\sigma_2$. Then it is straightforward to see that the integral of this part of the prior distribution will depend on a term $(\tau)^{-(d+1)}(\sigma)^{-d}$, as seen in the proof of Lemma 3.2. 
The likelihood, on the other hand, is given by

\begin{align}
\label{eq:mixlik}
L(\theta)&=\prod_{i=1}^n \left[p\mathcal{N}(\mu,\tau^2)+(1-p)\mathcal{N}(\mu + \tau \delta,\tau^2\sigma^2)\right] \nonumber \\
		&= \frac{1}{(2\pi)^\frac{n}{2}}\left[\frac{1}{\tau^n} p^n e^{-\frac{\sum\limits_{i=1}^n (x_i-\mu)^2}{2\tau^2}}+ \right. \nonumber \\
		&{} \left.  + \frac{1}{\tau^{n}\sigma}\sum\limits_{i=1}^n p^{n-1}(1-p)e^{-\frac{\sum\limits_{j\neq i} (x_j-\mu_1)^2}{2\tau^2}-\frac{(x_i-\mu_2)^2}{2\tau^2\sigma^2}}\right. \nonumber \\
		&{} \left. +\frac{1}{\tau^n\sigma^2}\sum_{i=1}^n \sum_{k\neq i} p^{n-2}(1-p)^2 e^{-\frac{\sum\limits_{j\neq i,k} (x_j-\mu)^2}{2\tau^2}} \cdot \right. \nonumber \\
		&{} \left.  \cdot e^{-\frac{\left[(x_i-(\mu+\tau\delta))^2+(x_k-(\mu+\tau\delta))^2 \right]}{2\tau^2\sigma^2}} \right. \nonumber \\
		&{} \left. +\cdots+(1-p)^n \frac{1}{\tau^2\sigma^2} e^{-\frac{1}{2}\sum\limits_{i=1}^n (x_i-(\mu+\tau\delta))^2}\right].
\end{align}

When composing the prior with the part of the likelihood which only depends on the
first component, this part does not provide information about the parameters $\sigma$ and the integral will diverge. 

In particular, the integral of the first part of the posterior distribution relative to the part of the likelihood dependent on the first component only and on the product of the diagonal terms of the Fisher information matrix for the prior when considering a two-component mixture model is

\footnotesize{
\begin{align*}
\int_0^1  & \int_{-\infty}^{+\infty} \int_{-\infty}^{+\infty} \int_0^{\infty} \int_0^{\infty} c \frac{p_1^n}{\tau^n}\frac{p_1^2 p_2^2}{\tau^3\sigma^2} \exp \left\{-\frac{1}{2\tau^2}\sum_{i=1}^n(x_i-\mu_1)^2\right\} \nonumber \\
				& \times{} \left\{ \int_{-\infty}^{\infty} \frac{\left[ \sigma\exp\left\{-\frac{(\tau\sigma y + \delta)^2}{2\tau^2} \right\} - \exp\left\{-\frac{y^2}{2} \right\}\right]^2}{p_1 \sigma \exp\left\{-\frac{(\tau \sigma y + \delta)^2}{2\tau^2}\right\}+p_2\exp\left\{-\frac{y^2}{2}\right\}}
d y \right. \nonumber \\
				& \times \left.{} \int_{-\infty}^{\infty} \frac{z^2 \exp(-z^2)}{p_1 \exp\left\{-\frac{z^2}{2}\right\}+\frac{p_2}{\sigma}\exp\left\{-\frac{(z\tau-\delta)^2}{2\tau^2\sigma^2}\right\}}
d z \right. \nonumber \\
				& \times \left.{} \int_{-\infty}^\infty
\frac{w^2 \exp\left\{ -w^2\right\}}{p_1 \sigma \exp\left\{-\frac{(\tau \sigma w+\delta)^2}{2\tau^2\sigma^2}\right\}+p_2\exp\left\{-\frac{w^2}{2}\right\}}
d w \right. \nonumber \\
				& \times \left.{} \int_{-\infty}^\infty
\frac{\left(z^2-1\right)^2 \exp\left\{ -z^2\right\}}{p_1 \exp\left\{-\frac{z^2}{2}\right\}+\frac{p_2}{\sigma}\exp\left\{-\frac{(z\tau+\mu_1-\mu_2)^2}{2\tau^2\sigma^2}\right\}}
d z \right. \nonumber \\
				&\times \left.{}  \int_{-\infty}^\infty
\frac{\left(u^2-1\right)^2 \exp\left\{-u^2\right\}}{p_1\sigma \exp\left\{-\frac{(u\tau\sigma+\mu_2-\mu_1)^2}{2\tau^2}\right\}+p_2\exp\left\{-\frac{u^2}{2}\right\}} 
d u \right\}^\frac{1}{2} \nonumber  \\
& \qquad d \tau  d \sigma d \mu_1 d \mu_2 d p_1. \nonumber
\end{align*}
}

\normalsize{
When considering the integrals relative to the Jeffreys prior, they do not
represent an issue for convergence with respect to the scale parameters,
because exponential terms going to $0$ as the scale parameters tend to $0$ are
present. However, when considering the part out of the previous integrals, a
factor $\sigma^{-2}$ which causes divergence is present. Then this particular
part of the posterior distribution does not integrate.

When considering the case of $k$ components, the integral inversily
depends on $\sigma_1, \sigma_2,\cdots, \sigma_{k-1}$ which implies the posterior
always is improper. 
}
\end{proof}

\subsection*{Appendix G: Improperness of the posterior distribution deriving from the multivariate Jeffreys prior}

Since the posterior distribution which follows from the use of the multivariate Jeffreys prior on the complete set of parameters is improper, we expect to see non-convergent behaviors in the MCMC simulations, in particular for small sample sizes. 
For small sample sizes, the chains tend to get stuck when very small values of standard deviations are accepted. Figure \ref{fig:improper_post} shows the results for different sample sizes and different scenarios (in particular, the situations when the means are close or well separated from one another are considered) for a mixture model with two and
three Gaussian components: sometimes the chains do not converge and tend
towards very extreme values of means, sometimes the chains get stuck to very small
values of standard deviations. 

\begin{figure}
\centering
\includegraphics[scale=0.5]{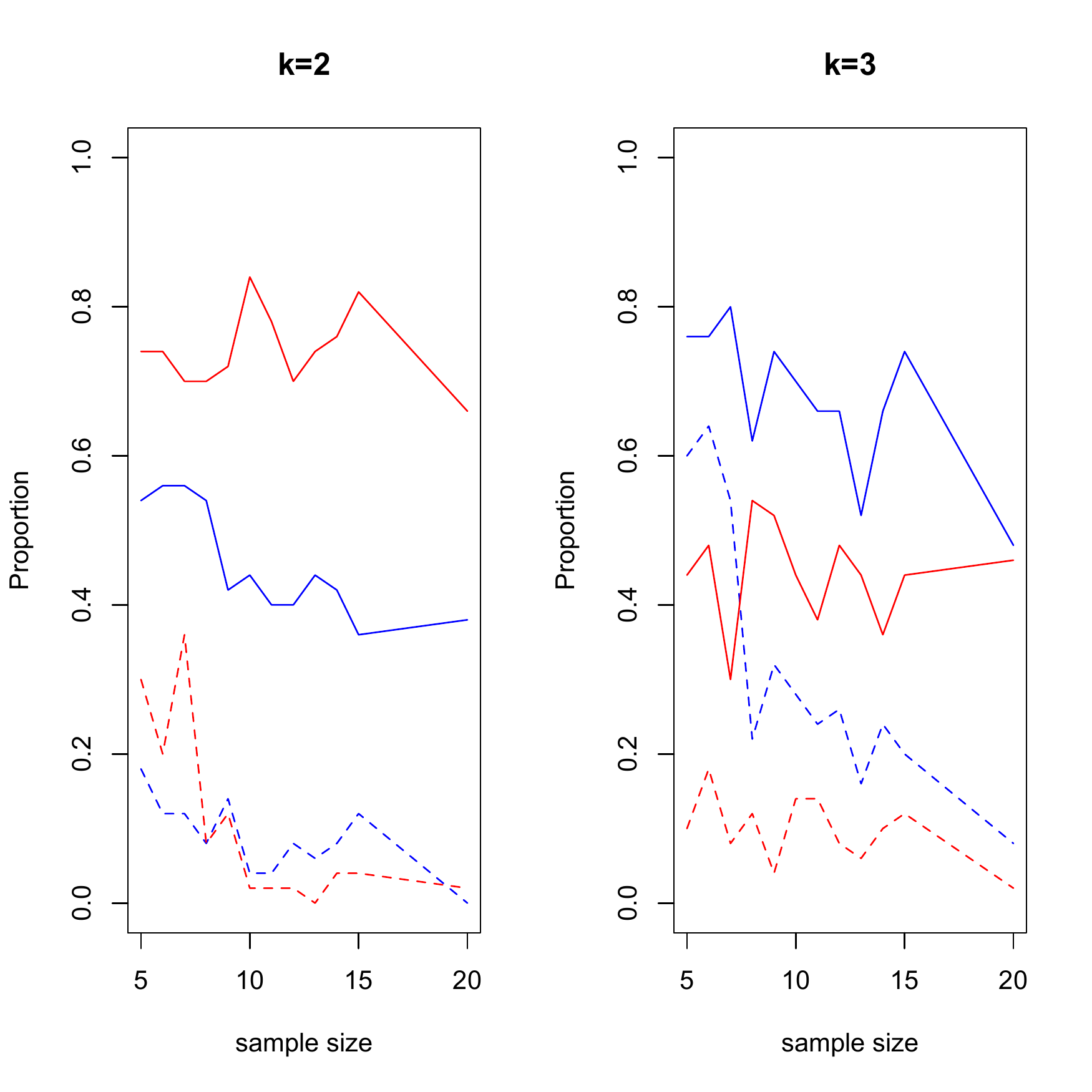}
\caption{All parameters unknown: results from 50 replications of the experiment with close means (solid lines) and well-separated means (dashed lines) based on $10^5$ simulations and a burn-in of $10^4$ simulations. The graph shows the proportion of Monte Carlo chains stuck at values of standard deviations close to zero (blue lines) and the proportion of chains diverging towards high values of means. The case of a two-component GMM is on the left, the case of a three-component GMM is on the right.}
\label{fig:improper_post}
\end{figure}

The improperness of the posterior distribution is not only due to the scale
parameters: we may use a reparametrization of the problem as in Equation
(8) and use a proper prior on the parameter $\sigma$, for example,
by following \cite{robert:mengersen:1999}

\begin{equation*}
	p(\sigma)=\frac{1}{2}\mathcal{U}_{[0,1]}(\sigma)+\frac{1}{2}\frac{1}{\mathcal{U}_{[0,1]}(\sigma)}.
\end{equation*}

\noindent and the Jeffreys prior for all the other parameters
$(p,\mu,\delta,\tau)$ conditionally on $\sigma$, and still face the
same issue. Actually, using a proper prior on $\sigma$ does not avoid
convergence trouble, as demonstrated by Figure \ref{fig:improper_post_props}, which shows
that, even if the chains with respect to the standard deviations are not stuck
around $0$ when using a proper prior for $\sigma$ in the reparametrization
proposed by \cite{robert:mengersen:1999}, the chains with respect to the
locations parameters demonstrate a divergent behavior.

\begin{figure}
\centering
\includegraphics[scale=0.5]{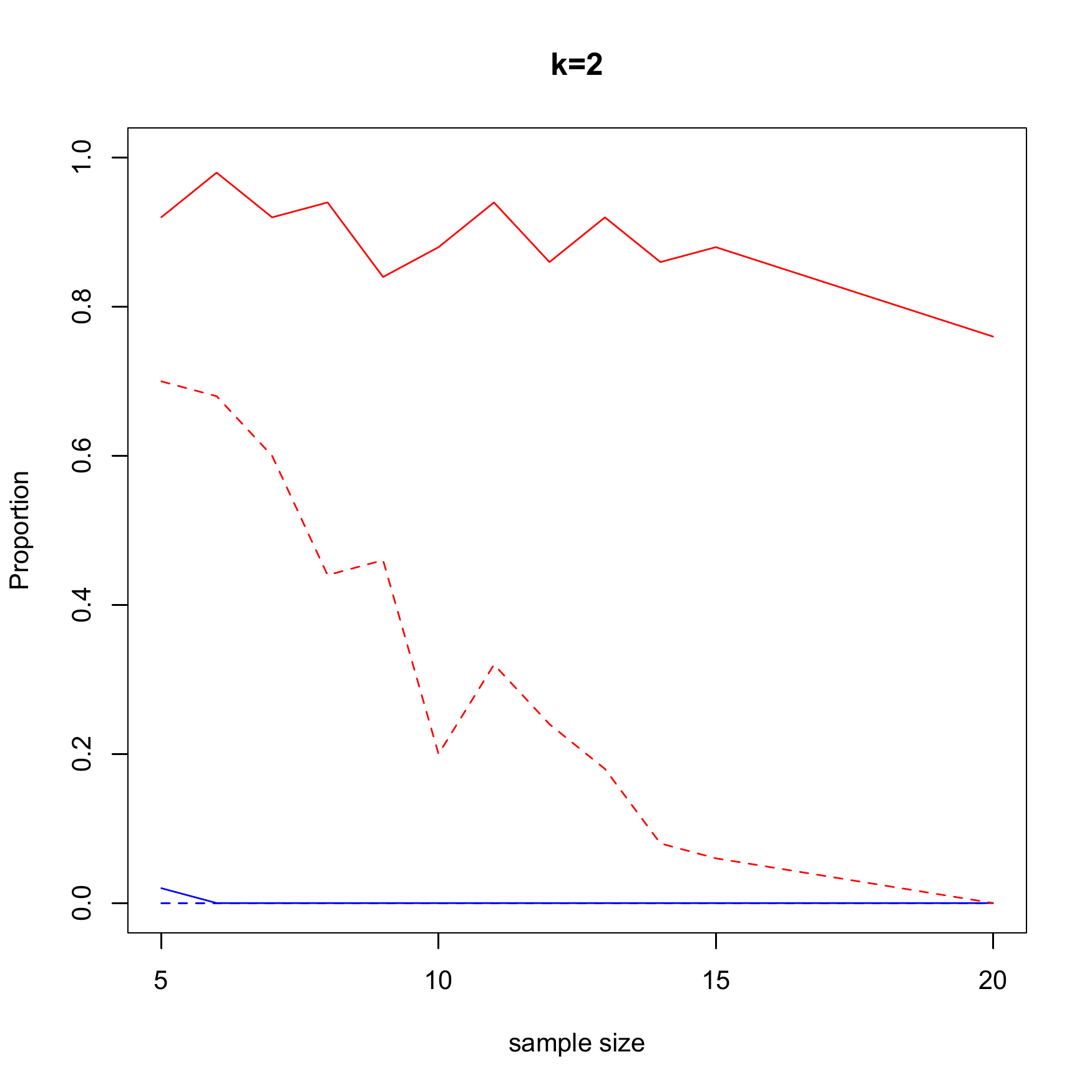}
\caption{All parameters unknown, proper prior on $\sigma$, two-component GMM: results from 50 replications of the experiment for both close means (solid lines) and far means (dashed lines) based on $10^5$ simulations and a burn-in of $10^4$ simulations. The graph shows the proportion of Monte Carlo chains stuck at values of standard deviations close to 0 (blue lines) and the proportion of chains diverging towards high values of means (red lines).}
\label{fig:improper_post_props}
\end{figure}

These problems are overcome by the hierarchical prior proposed in Section 4: a simulation study (not shown) along the lines of the one just presented for the posterior distribution deriving from the multivariate Jeffreys prior confirms that the chains obtained via MCMC for 50 replications of the experiments always have a convergent behavior despite the posterior being improper. 

\subsection*{Appendix H: The properness of the hierarchical representation of Theorem 4.1}

\textit{The posterior distribution derived from the hierarchical representation of the Gaussian mixture model 
associated with (9), (10) and (11)
is proper. }

\begin{proof}
Consider the composition of the three levels of the hierarchical model described in equations (9), (10) and (11):

\begin{align}
\label{eq:hierarch_post}
\pi(\boldsymbol{\mu},\boldsymbol{\sigma},&\mu_0,\zeta_0;\mathbf{x})  \propto L(\mu_1,\mu_2,\sigma_1,\sigma_2;\mathbf{x})  p^{-1/2} (1-p)^{-1/2}
			\nonumber \\
			& {} \times \frac{1}{\zeta_0} \frac{1}{2\pi\zeta_0^2} \exp\left\{- \frac{(\mu_1-\mu_0)^2 (\mu_2-\mu_0)^2}{2\zeta_0^2}\right\} \nonumber \\ 
			& {} \times \left[ \frac{1}{2}\frac{1}{\zeta_0} \mathbb{I}_{[\sigma_1\in(0,\zeta_0)]}(\sigma_1) + \frac{1}{2}\frac{\zeta_0}{\sigma_1^2} \mathbb{I}_{[\sigma_1\in(\zeta_0,+\infty)]}(\sigma_1) \right] \nonumber \\
			& {} \times \left[ \frac{1}{2}\frac{1}{\zeta_0} \mathbb{I}_{[\sigma_2\in(0,\zeta_0)]}(\sigma_2) + \frac{1}{2}\frac{\zeta_0}{\sigma_2^2} \mathbb{I}_{[\sigma_2\in(\zeta_0,+\infty)]}(\sigma_2) \right]
\end{align}

\noindent where $L(\cdot;\mathbf{x})$ is given by Equation \eqref{eq:mixlik}. 

Once again, we can initialize the proof by considering only the first term in the sum composing the likelihood function
for the mixture model. Then the product in \eqref{eq:hierarch_post} may be split into four terms corresponding to the
different terms in the scale parameters' prior. For instance, the first term is

\begin{align*}
\int_0^\infty & \int_{-\infty}^\infty \int_\mathbb{R}\int_\mathbb{R} \int_\mathbb{R^+} \int_\mathbb{R^+} \int_0^1 
				\frac{1}{\sigma_1^n} p_1^n \exp \left\{- \frac{\sum\limits_{i=1}^n (x_i-\mu_1)^2}{2\sigma_1^2} \right\} \nonumber \\
				& {} \times \frac{1}{\zeta_0^3} \exp\left\{-\frac{(\mu_1-\mu_0)^2 (\mu_2-\mu_0)^2}{2\zeta_0^2} \right\} \nonumber \\
				& {} \times \frac{1}{4}\frac{1}{\zeta_0} \frac{1}{\zeta_0} \mathbb{I}_{[\sigma_1 \in (0,\zeta_0)]}(\sigma_1) \mathbb{I}_{[\sigma_2 \in (0,\zeta_0)]}(\sigma_2) \nonumber \\
& d p d\sigma_1 d\sigma_2 d\mu_1 d\mu_2 d \mu_0 d\zeta_0 
\end{align*}

\noindent and the second one

\begin{align*}
\int_0^\infty & \int_{-\infty}^\infty \int_\mathbb{R}\int_\mathbb{R} \int_\mathbb{R^+} \int_\mathbb{R^+} \int_0^1 
				\frac{1}{\sigma_1^n} p_1^n \exp \left\{- \frac{\sum\limits_{i=1}^n (x_i-\mu_1)^2}{2\sigma_1^2} \right\} \nonumber \\
				& {} \times \frac{1}{\zeta_0^3} \exp\left\{-\frac{(\mu_1-\mu_0)^2 (\mu_2-\mu_0)^2}{2\zeta_0^2} \right\} \nonumber \\
				& {} \times \frac{1}{4}\frac{1}{\zeta_0} \frac{\zeta_0}{\sigma_2^2} \mathbb{I}_{[\sigma_1 \in (0,\zeta_0)]}(\sigma_1) \mathbb{I}_{[\sigma_2 \in (\zeta_0,\infty)]}(\sigma_2) \nonumber \\
& d p d\sigma_1 d\sigma_2 d\mu_1 d\mu_2 d \mu_0 d\zeta_0 .
\end{align*}

The integrals with respect to $\mu_1$, $\mu_2$ and $\mu_0$ converge, since the data are carrying
information about $\mu_0$ through $\mu_1$. The integral with respect to $\sigma_1$ converges as well, because, as
$\sigma_1 \rightarrow 0$, the exponential function goes to $0$ faster than $\frac{1}{\sigma_1^n}$ goes to $\infty$
(integrals where $\sigma_1>\zeta_0$ are not considered here because this reasoning may easily extend to those
cases).  The integrals with respect to $\sigma_2$ 
converge, because they provide a factor proportional to $\zeta_0$ and $1/\zeta_0$ respectively
which simplifies with the normalizing constant of the reference distribution (the uniform in the first case and the
Pareto in second one). Finally, the term $1/\zeta_0^4$ resulting from the previous operations has its counterpart in the
integrals relative to the location priors. Therefore, the integral with respect to $\zeta_0$ converges. 

The part of the posterior distribution relative to the weights is not an issue, since the weights belong to the
corresponding simplex.
\end{proof}

\subsection*{Appendix I: Effect of the sample size in the conservativeness of the Jeffreys prior}

This Appendix shows the estimation of the density \eqref{eq:overfit} when a higher number of components is assumed, together with a Jeffreys prior for the weigths of the mixture for sample sizes $50, \, 100, \, 500, \, 1,000$, when the true model is 

\begin{equation}
\label{eq:overfit}
0.5 \mathcal{N}(-3,1) + 0.5 \mathcal{N}(3,1).
\end{equation}

Figures \ref{fig:hierch-overfitting-dens50}-\ref{fig:hierch-overfitting-dens1000} show
the $M=20$ resulting estimated densities against \eqref{eq:overfit}; as the
number of components increases, the estimated densities are less and less
smooth, nevertheless this feature is mitigated as the sample size increases.

\begin{figure}
\centering
\includegraphics[scale=0.5]{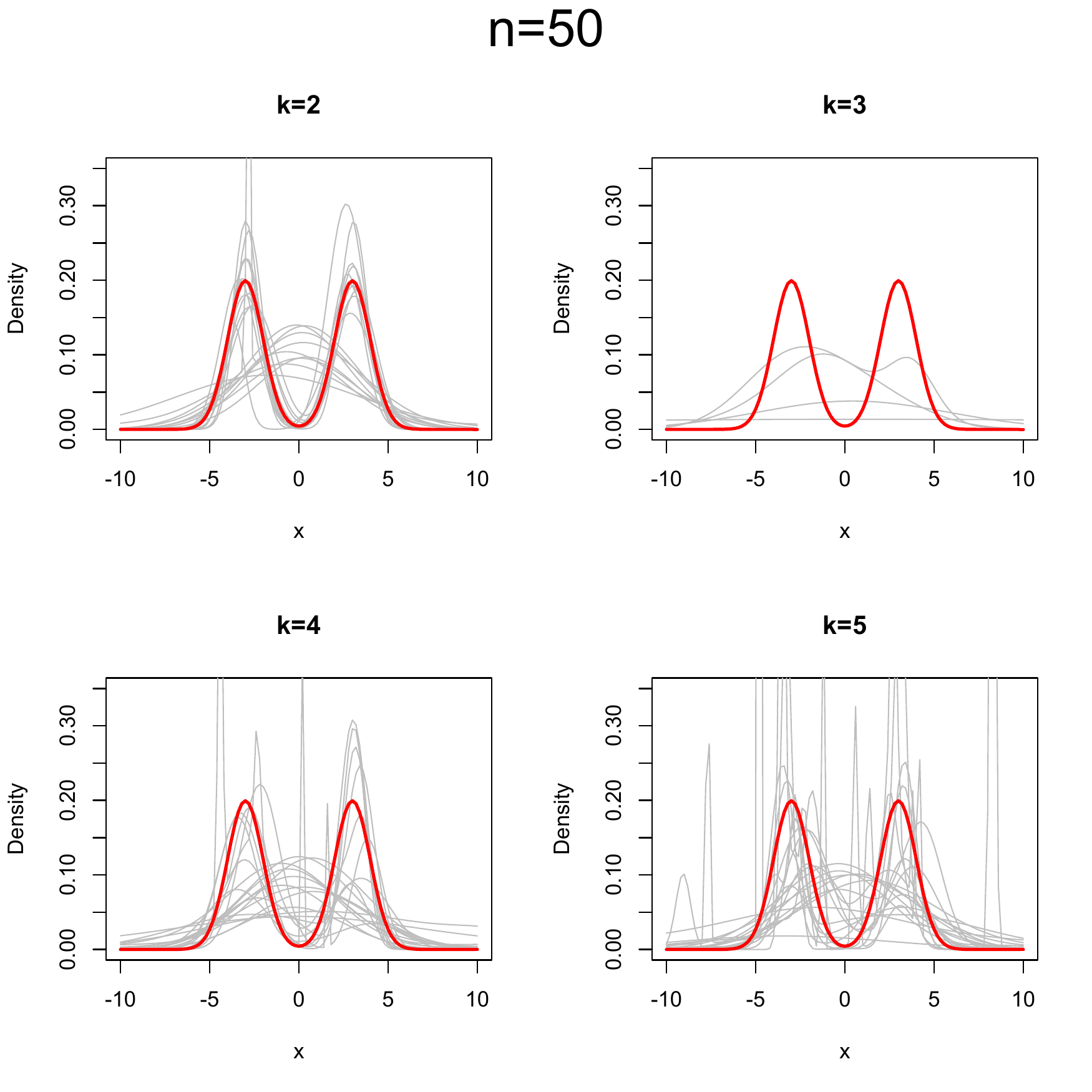}
\caption{Estimated densities in $20$ replications of the experiment (in grey) against the true model (in red) for $n=50$.}
\label{fig:hierch-overfitting-dens50}
\end{figure}

\begin{figure}
\centering
\includegraphics[scale=0.5]{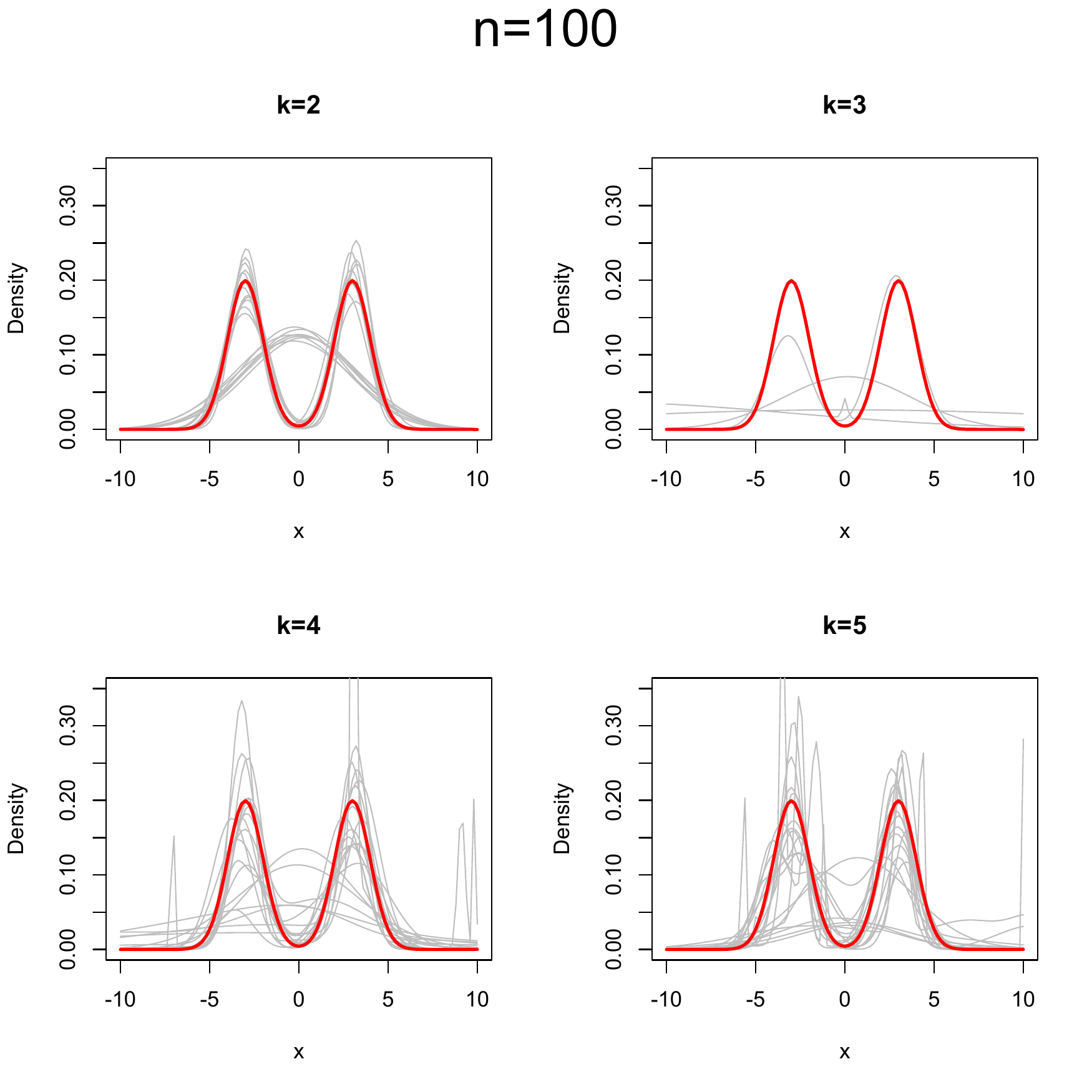}
\caption{As in Figure \ref{fig:hierch-overfitting-dens50}, for $n=100$.}
\label{fig:hierch-overfitting-dens100}
\end{figure}

\begin{figure}
\centering
\includegraphics[scale=0.5]{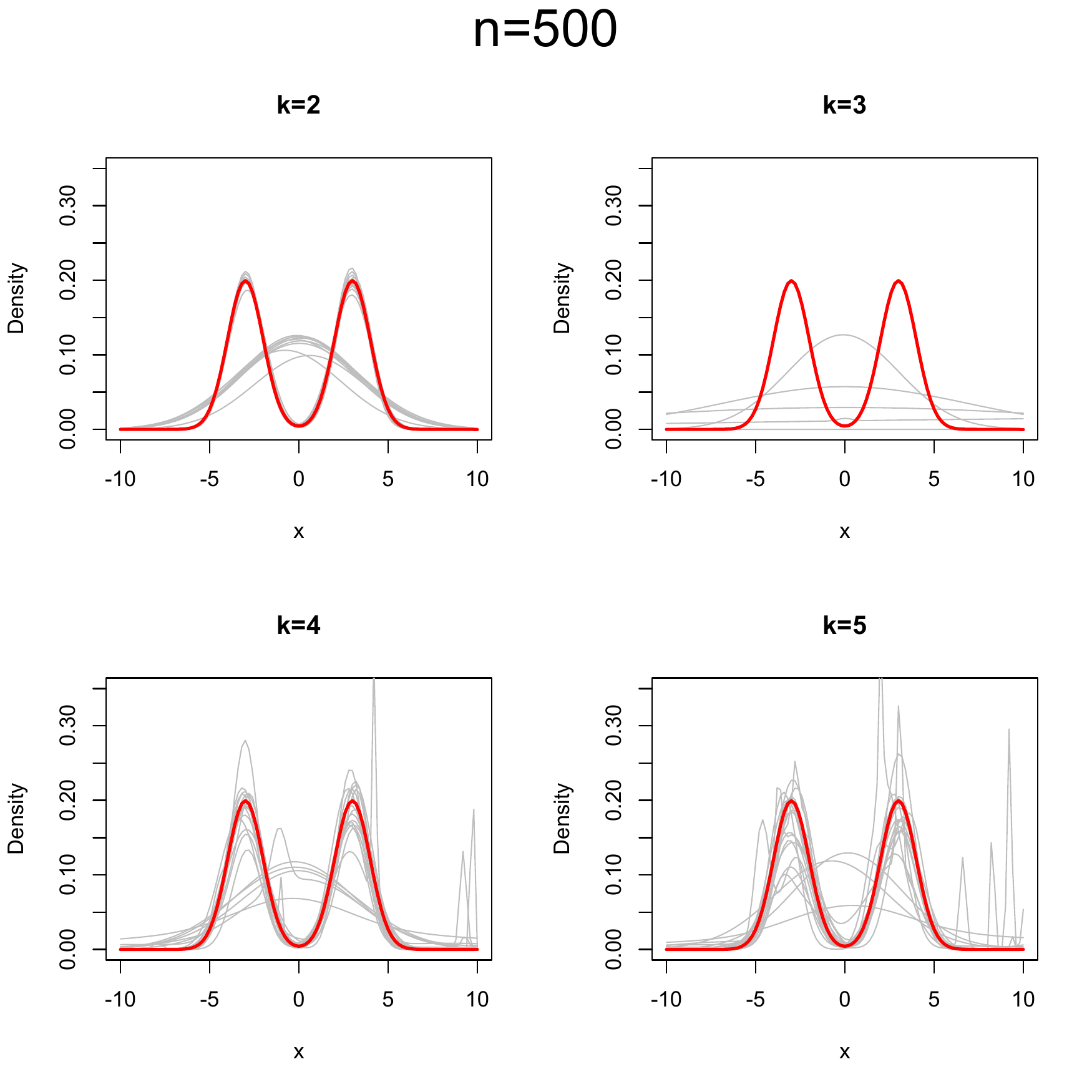}
\caption{As in Figure \ref{fig:hierch-overfitting-dens50}, for $n=500$.}
\label{fig:hierch-overfitting-dens500}
\end{figure}

\begin{figure}
\centering
\includegraphics[scale=0.5]{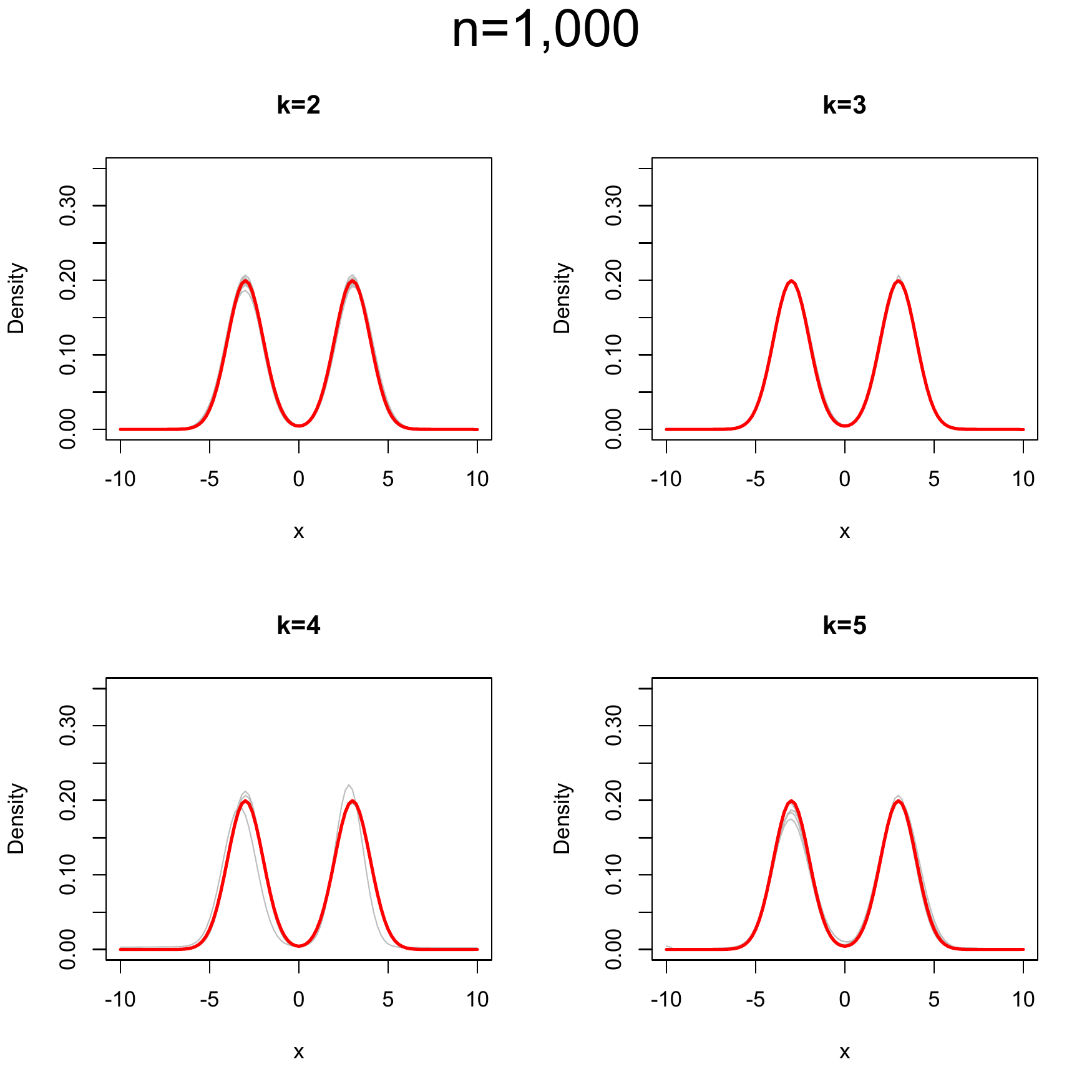}
\caption{As in Figure \ref{fig:hierch-overfitting-dens50}, for $n=1000$.}
\label{fig:hierch-overfitting-dens1000}
\end{figure}

\subsection*{Appendix J: Implementation Features}

The computing expense due to derive the Jeffreys prior for a set of parameter values is in $\mathrm{O}(d^2)$ if
$d$ is the total number of (independent) parameters.

Each element of the Fisher information matrix is an integral of the form

\begin{equation*}
-\int_{\mathcal{X}} \frac{\partial^2 \log \left[\sum_{h=1}^k p_h\,f(x|\theta_h)\right]}{\partial \theta_i \partial \theta_j}\left[\sum_{h=1}^k p_h\,f(x|\theta_h)\right]^{-1} d x
\end{equation*}

\noindent which has to be approximated. We have applied both numerical
integration and Monte Carlo integration and simulations show that, in general,
numerical integration obtained via Gauss-Kronrod quadrature, produces more stable results. Nevertheless, when the values of one or more 
standard deviations or weights are too
small, either the approximations tend to be very dependent on the bounds used for
numerical integration (usually chosen to omit a negligible part of the
density) or the numerical approximation may not be even applicable. 
In this case, Monte Carlo integration seems to be more stable, where the stability of the results depends on the Monte Carlo sample size. 

Figure \ref{fig:MCvsNUM_incrN} shows the value of the Jeffreys prior obtained via Monte Carlo integration of the
elements of the Fisher information matrix for an increasing number of Monte Carlo simulations both in the case where the
Jeffreys prior diverges (where the standard deviations are small) and  where it assumes low values. The value
obtained via Monte Carlo integration is then compared with the value obtained via numerical integration. The sample size
relative to the point where the graph stabilizes may be chosen to perform the approximation. The number of Monte Carlo simulations needed to reach a fixed amount of variability may be chosen. 

\begin{figure}
\centering
\includegraphics[scale=0.4]{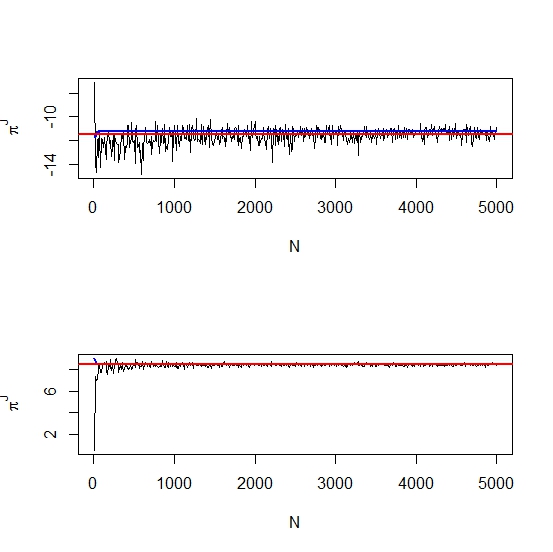}
\caption{Jeffreys prior obtained via Monte Carlo integration (and numerical integration, in \textit{red}) for the model $0.25\mathcal{N}(-10,1)+0.10\mathcal{N}(0,5)+0.65\mathcal{N}(15,7)$ (above) and for the model $\frac{1}{3}\mathcal{N}(-1,0.2)+\frac{1}{3}\mathcal{N}(0,0.2)+\frac{1}{3}\mathcal{N}(1,0.2)$ (below). On the $x$-axis there is the number of Monte Carlo simulations.}
\label{fig:MCvsNUM_incrN}
\end{figure}

Since the approximation problem is one-dimensional, another numerical solution could be based on the Riemann sums;
Figure \ref{fig:MCvsRIEMbxp} shows the comparison between the approximation to
the Jeffreys prior obtained via Monte Carlo integration and via the sums of Riemann: it is clear that the
Riemann sums lead to more stable results in comparison with Monte Carlo integration. On the other hand, they can be applied
in more situations than the Gauss-Kromrod quadrature, in particular, in cases where the standard deviations are very
small (of order $10^{-2}$). Nevertheless, when the standard deviations are smaller than this, one has to pay attention
on the features of the function to integrate. In fact, the mixture density tends to concentrate around the modes, with
regions of density close to 0 between them. The elements of the Fisher informtation matrix are, in general, ratios
between the components' densities and the mixture density, then in those regions an indeterminate form of type
$\frac{0}{0}$ is obtained; Figure \ref{fig:FishInfoelem} represents the behavior of one of these elements when $\sigma_i
\rightarrow 0$ for $i=1,\cdots,k$. 

\begin{figure}
\centering
\includegraphics[scale=0.4]{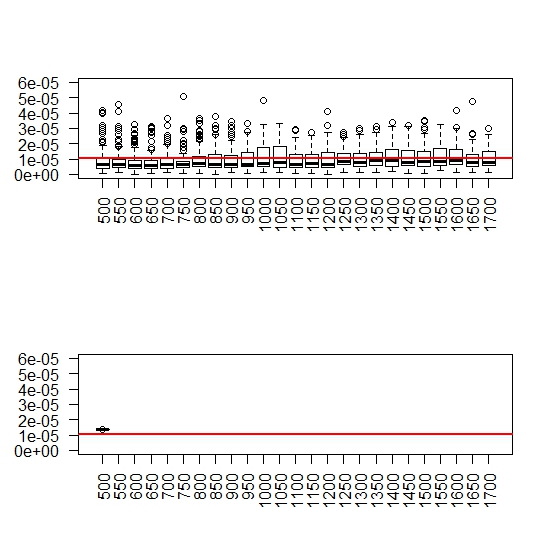}
\caption{Boxplots of 100 replications of the procedure based on Monte Carlo integration (above) and Riemann sums (below) which approximates the Fisher information matrix of the model $0.25\mathcal{N}(-10,1)+0.10\mathcal{N}(0,5)+0.65\mathcal{N}(15,7)$ for sample sizes from $500$ to $1700$. The value obtained via numerical integration is represented by the red line (in the graph below, all the approximations obtained with more than $550$ knots give the same result, exactly equal to the one obtained via Riemann sums).}
\label{fig:MCvsRIEMbxp}
\end{figure}

\begin{figure}
\centering
\includegraphics[scale=0.4]{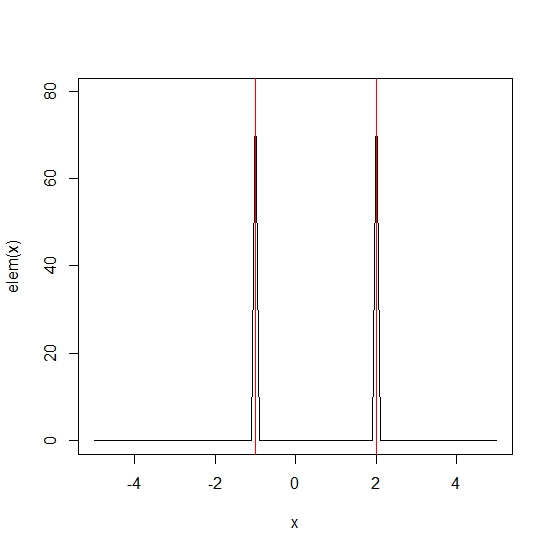}
\caption{The first element on the diagonal of the Fisher information matrix relative to the first weight of the two-component Gaussian mixture model $0.5 \mathcal{N}(-1,0.01)+0.5 \mathcal{N}(2,0.01)$.}
\label{fig:FishInfoelem}
\end{figure}

Thus, we have decided to use the Riemann sums (with a number of points equal to $550$) to approximate the Jeffreys prior when the standard deviations are sufficiently large and Monte Carlo integration (with sample sizes of $1500$) when they are too small. In this case, the variability of the results seems to decrease as $\sigma_i$ approaches $0$, as shown in Figure \ref{fig:MCsmallsd}.
 
\begin{figure}
\centering
\includegraphics[scale=0.4]{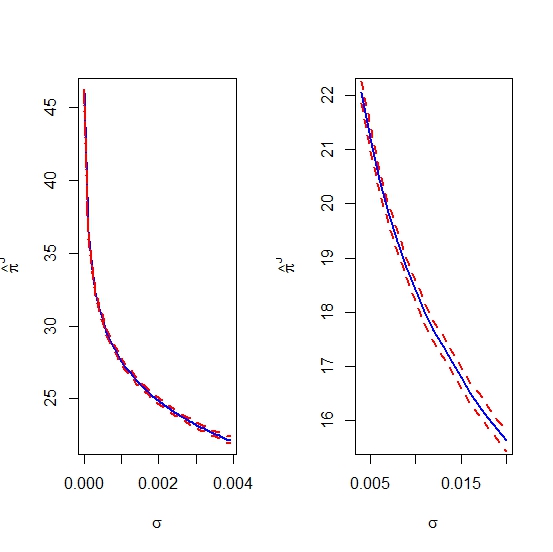}
\caption{Approximation of the Jeffreys prior (in log-scale) for the two-component Gaussian mixture model $0.5 \mathcal{N}(-1,\sigma)+0.5\mathcal{N}(2,\sigma)$, where $\sigma$ is taken equal for both components and increasing.}
\label{fig:MCsmallsd}
\end{figure}

We have chosen to consider Monte Carlo samples of size equal to $1500$ because both the value of the approximation and
its standard deviations are stabilizing.

An adaptive MCMC algorithm has been used to define the variability of the
kernel density functions used to propose the moves. During the burnin, the
variability of the kernel distributions has been reduced or increased depending
on the acceptance rate, in a way such that the acceptance rate stay between $20\%$
and $40\%$. The transitional kernel used have been truncated normals for the
weights, normals for the means and log-normals for the standard deviations (all
centered on the values accepted in the previous iteration).

\end{document}